\newtheorem{theorem}{Theorem}[section]
\newtheorem{lemma}[theorem]{Lemma}
\newtheorem{proposition}[theorem]{Proposition}
\newtheorem{corollary}[theorem]{Corollary}
\theoremstyle{definition}
\newtheorem{definition}[theorem]{Definition}
\newtheorem{example}[theorem]{Example}
\theoremstyle{remark}
\newtheorem{remark}[theorem]{Remark}
\newcommand{\bits}{\{0,1\}}
\newcommand{\F}{\mathbb{F}}
\newcommand{\N}{\mathbb{N}}
\newcommand{\poly}{\mathrm{poly}}
\newcommand{\codim}{\mathrm{codim}}
\newcommand{\SPDPspace}{\mathcal{V}}
\newcommand{\rankSPDP}{\Gamma}
\newcommand{\codimSPDP}{\mathrm{codim}}
\DeclareMathOperator{\rk}{rank}
\DeclareMathOperator{\rank}{rank}
\DeclareMathOperator{\Perm}{Perm}
\title{Shifted Partial Derivative Polynomial Rank and Codimension}
\author{Darren J.\ Edwards\\
\small Swansea University, United Kingdom\\
\small \texttt{d.j.edwards@swansea.ac.uk}}
\date{}
\begin{document}

\maketitle

\begin{abstract}
Shifted partial derivative (SPD) methods are a central algebraic tool for proving
circuit lower bounds by measuring the dimension of spaces of shifted derivatives
of a polynomial. In this paper we develop the Shifted Partial Derivative Polynomial
(SPDP) framework, packaging classical SPD methods into an explicit
coefficient-matrix formalism. This turns these spaces into concrete
linear-algebraic objects and introduces two dual complexity measures:
SPDP rank and SPDP codimension.

We first define the SPDP generating family, the associated span, and the SPDP
matrix $M_{\kappa,\ell}(p)$ in a fixed ambient coefficient space determined by the
$(\kappa,\ell)$ regime, so that rank is well-defined and codimension becomes a
canonical ``deficit from ambient fullness.'' We then prove structural properties
that make the framework usable as a standalone toolkit: monotonicity in the
shift/derivative parameters (with precise scoping for the $|S|=\kappa$ versus
$|S|\le \kappa$ conventions), invariance under admissible variable symmetries and
basis changes, and robustness across standard Boolean/multilinear embedding
conventions.

Next, we give generic width-to-rank upper-bound templates for local circuit
models, showing how combinatorial profile counting yields polynomial bounds
in logarithmic $(\kappa,\ell)$ regimes under the deterministic compiler/diagonal-basis
hypothesis. This explicitly separates the model-agnostic SPDP toolkit from
additional compiled refinements used in separate work (e.g. Edwards 2025);
nothing in this paper relies on those refinements. Finally, we illustrate the
codimension viewpoint on representative examples, emphasizing how codimension
captures rigidity as the failure of shifted-derivative spans to fill their
ambient space.

Overall, SPDP provides a clean, self-contained linear-algebraic packaging of
shifted partial derivatives that supports both rank-based and deficit-based
(codimension) reasoning, and is intended as a reusable reference for algebraic
complexity applications.
\end{abstract}

\noindent\textbf{Keywords:} Shifted partial derivatives, complexity measures, algebraic methods, rank, codimension

\tableofcontents
\newpage

\section{Introduction}
Shifted partial derivatives have become a standard tool in algebraic and Boolean
complexity, particularly for proving lower bounds against restricted circuit
models~\citep{NW1997,Kayal2012,GKKS2016,KayalSaha2016,LST2024,SY2010}. Most of the literature uses
\emph{dimension} of a shifted partial derivative space as the primary complexity
statistic. The present work isolates and formalizes a closely related invariant
package---the \emph{Shifted Partial Derivative Polynomial (SPDP)} framework---with
two dual quantities: \emph{SPDP rank} (dimension) and \emph{SPDP codimension}
(ambient deficit). For an application of these ideas to complexity separations,
see \citet{Edwards2025}.

\paragraph{Main contributions.}
This note isolates a clean, self-contained ``SPDP toolkit'' that can be cited
independently of any separation application.

\begin{itemize}
  \item \textbf{SPDP matrix formalism + dual invariant.}
  We package the shifted partial derivative (SPD) dimension measure as an explicit
  coefficient matrix $M_{\kappa,\ell}(p)$ whose rank is $\Gamma_{\kappa,\ell}(p)$, and we
  introduce the dual quantity $\mathrm{codim}_{\kappa,\ell}(p)$ as the ambient deficit.
  This makes invariance and monotonicity statements transparent as linear-algebraic
  properties of $M_{\kappa,\ell}(p)$.

  \item \textbf{Robustness and invariances.}
  We prove basis-invariance and permutation-invariance of $\Gamma_{\kappa,\ell}(p)$ and
  $\mathrm{codim}_{\kappa,\ell}(p)$ under the intended ambient conventions, clarifying the
  scope of invariance under Boolean/multilinear embeddings.

  \item \textbf{A width$\Rightarrow$rank upper-bound schema with a sharp profile count.}
  We formalize a minimal local-width computation model and prove a polynomial
  Width$\Rightarrow$Rank bound in the regime $\kappa,\ell=\Theta(\log n)$ and
  $R=C(\log n)^c$.
  The key technical bridge is a \emph{profile compression} lemma showing that the number
  of realizable interface-anonymous profiles is $R^{O(1)}$ \emph{independent of the
  window length $\kappa$}, avoiding the crude ordered-step count $(\log n)^{O(\kappa)}$.

  \item \textbf{Compatibility with the block-partitioned SPDP framework.}
  We record a dictionary relating the unblocked SPDP matrix $M_{\kappa,\ell}(p)$ to the
  block-partitioned matrix $M^B_{\kappa,\ell}(p)$ used in block-local/compiled settings,
  including the rank comparison $\Gamma^B_{\kappa,\ell}(p)\le \Gamma_{\kappa,\ell}(p)$.
\end{itemize}

\paragraph{What this paper does}
We provide (i) a clean definition of SPDP spaces and a canonical SPDP matrix
representation; (ii) rank and codimension invariants that are stable under
natural transformations; (iii) general properties (monotonicity, invariance,
and basic closure lemmas); and (iv) rank upper bounds for broad computational
models in a consistent parameter regime, together with codimension formulations
that clarify rigidity.

\paragraph{What this paper does not do}
This paper is \emph{not} a separation claim and does not address
barrier frameworks or any global complexity-class consequences.
The aim is to establish SPDP rank and codimension as a standalone
measure with a rigorous, self-contained development.

\paragraph{Application to other work (compiled/block-partitioned SPDP).}
This note develops SPDP rank and codimension in the \emph{unblocked} matrix formalism,
i.e.\ via the coefficient matrix $M_{\kappa,\ell}(p)$ and its rank $\Gamma_{\kappa,\ell}(p)$ (and the dual
ambient deficit $\mathrm{codim}_{\kappa,\ell}(p)$) under a fixed ambient/basis convention.
Other work~\citep{Edwards2025} applies a more restrictive \emph{compiled} variant in which a fixed
radius--$1$ block partition $B$ is imposed and one takes the rank of a structured restriction
$M^B_{\kappa,\ell}(p)$ (with $\Gamma^B_{\kappa,\ell}(p)\le \Gamma_{\kappa,\ell}(p)$) tailored to block-local
compiler transformations. Nothing in the present paper relies on that compiled setting;
we include a short dictionary (\ref{subsec:compat-spdp-cew}) only to align notation.

\section{Preliminaries and Notation}
Fix a base field $\F$ (e.g.\ $\mathbb{Q}$ or $\mathbb{R}$). We work with
polynomials $f \in \F[x_1,\dots,x_n]$ representing Boolean functions on
$\bits^n$ (via multilinearization or a chosen embedding, made explicit in
\ref{sec:embedding}).

\subsection{Multi-index notation}
For $\alpha = (\alpha_1,\dots,\alpha_n)\in\N^n$, let $|\alpha|=\sum_i \alpha_i$
and write $\partial^\alpha = \partial_{x_1}^{\alpha_1}\cdots \partial_{x_n}^{\alpha_n}$.

\subsection{Parameter regime}
Throughout, we fix parameters $(\kappa,\ell)$ that may depend on $n$.
All theorems in this paper explicitly state which $(\kappa,\ell)$ regime they assume.
When helpful, we highlight standard regimes, e.g.\ $\kappa=O(1)$ with $\ell=\poly(n)$,
or $\kappa=\Theta(\log n)$ with $\ell=\Theta(\log n)$. No claim in this paper depends
on mixing incompatible parameter choices.

\section{Polynomial embedding of Boolean functions}
\label{sec:embedding}
We record a minimal embedding convention. The reader may substitute an equivalent
standard convention; our invariance results show the measure is robust to such
choices.

\begin{definition}[Boolean-to-polynomial representation]
\label{def:spdp-boolean}
Let $f:\bits^n\to\bits$. A \emph{polynomial representation} of $f$ is a polynomial
$\tilde f\in\F[x_1,\dots,x_n]$ that agrees with $f$ on $\bits^n$.
Unless otherwise stated, we assume $\tilde f$ is multilinear (via standard
multilinearization modulo $(x_i^2-x_i)$).
\end{definition}

\begin{remark}
\label{rem:non-multilinear}
If a non-multilinear representation is used, all SPDP spaces below may be taken
modulo the Boolean ideal $(x_i^2-x_i)$; this does not affect the rank/codimension
definitions, only the ambient basis choice.
\end{remark}

\section{The SPDP framework}
\label{sec:spdp-framework}

Shifted partial derivative methods study the linear span of low-order partial
derivatives of a target polynomial, optionally multiplied by low-degree monomials
(shift). In this paper we package that construction as an explicit matrix and
work with two dual invariants: SPDP rank and SPDP codimension.

\subsection{Parameter regime and scope}
\label{subsec:params-scope}

Fix input length $n$ and a base field $\F$. Let $p \in \F[x_1,\dots,x_n]$ be a
polynomial representing a Boolean function under a chosen embedding (e.g.\ multilinear
representation modulo $(x_i^2-x_i)$). We work with parameters $(\kappa,\ell)$, where
$\kappa$ is the derivative order and $\ell$ is the degree bound for the shift monomials.

A key methodological requirement is \emph{parameter consistency}: whenever we compare
an upper bound for a computational model to a lower bound for an explicit family, both
bounds must hold for the \emph{same} $(\kappa,\ell)$ regime.

\begin{remark}[Typical regimes]
Two standard choices are:
\begin{enumerate}[label=(\roman*),leftmargin=2.4em]
\item $\kappa,\ell = \Theta(\log n)$ (polylogarithmic derivatives/shifts), often used when
one expects polynomial bounds under suitable structural assumptions.
\item $\kappa = \Theta(n)$ with $\ell=O(1)$ (high-order derivatives, no/low shift), used in
classical partial-derivative rank arguments for explicit polynomials.
\end{enumerate}
This paper is agnostic: all statements below specify the regime they use.
\end{remark}

\subsection{Relation to classical shifted partial derivatives (SPD) and what ``SPDP'' adds}
\label{subsec:spdp-vs-spd}

Shifted partial derivatives (SPD) are typically presented as a \emph{space}
dimension measure: one takes low-order partial derivatives of a target
polynomial and then ``shifts'' them by multiplying with low-degree monomials,
and measures the dimension of the resulting linear span
\citep[e.g.][]{NW1997,SY2010}.
The SPDP framework used in this paper is \emph{not a different measure in disguise};
it is a matrix-based packaging of the same underlying SPD span, together with a
canonical ambient-basis convention that makes \emph{codimension} a first-class dual
invariant.

\paragraph{Classical SPD space}
Let $p \in \mathbb{F}[x_1,\dots,x_n]$ (or its multilinear representative under the
Boolean embedding of \ref{sec:embedding}). Fix parameters $(\kappa,\ell)$.
Define the classical shifted partial derivative space (at order $\kappa$ and shift degree $\ell$) as
\begin{equation}
\label{eq:spd-space}
\mathsf{SPD}_{\kappa,\ell}(p)
\;:=\;
\mathrm{span}_{\mathbb{F}}
\bigl\{\, m \cdot \partial^\alpha p \;:\; |\alpha|=\kappa,\; \deg(m)\le \ell \,\bigr\}.
\end{equation}
In the Boolean/multilinear setting it is common (and consistent with our conventions)
to restrict $m$ to multilinear monomials and to interpret all polynomials modulo the
Boolean ideal $\langle x_i^2-x_i\rangle$ (\ref{rem:non-multilinear}).

The classical SPD \emph{complexity statistic} is then the dimension
\[
\mathrm{SPDdim}_{\kappa,\ell}(p) \;:=\; \dim_{\mathbb{F}}\bigl(\mathsf{SPD}_{\kappa,\ell}(p)\bigr).
\]

\paragraph{SPDP is the same span, presented as a rank}
In \ref{sec:spdp-matrix-rank-codim} we define the SPDP generating family
$\mathcal{G}_{\kappa,\ell}(p)$ and its span $V_{\kappa,\ell}(p)$, and then package that span
into an explicit coefficient matrix $M_{\kappa,\ell}(p)$ in a fixed ambient monomial basis.
Concretely,
\begin{equation}
\label{eq:spdp-equals-spd-span}
V_{\kappa,\ell}(p)
\;=\;
\mathrm{span}_{\mathbb{F}}\bigl(\mathcal{G}_{\kappa,\ell}(p)\bigr)
\;=\;
\mathsf{SPD}_{\kappa,\ell}(p),
\end{equation}
and therefore the SPDP rank defined in \ref{sec:spdp-matrix-rank-codim} agrees
with the classical SPD dimension:
\begin{equation}
\label{eq:spdp-rank-equals-spddim}
\Gamma_{\kappa,\ell}(p)
\;=\;
\mathrm{rank}_{\mathbb{F}}\!\bigl(M_{\kappa,\ell}(p)\bigr)
\;=\;
\dim_{\mathbb{F}}\bigl(\mathsf{SPD}_{\kappa,\ell}(p)\bigr)
\;=\;
\mathrm{SPDdim}_{\kappa,\ell}(p).
\end{equation}
The equality $\Gamma_{\kappa,\ell}(p)=\dim(V_{\kappa,\ell}(p))$ is immediate because the rows of
$M_{\kappa,\ell}(p)$ are precisely the coefficient vectors (in the chosen monomial basis) of
the generating polynomials, so the row-span of $M_{\kappa,\ell}(p)$ is canonically
isomorphic to $V_{\kappa,\ell}(p)$.

\paragraph{So what is genuinely new here?}
The SPDP framework adds three pieces of structure that are often implicit or absent
in ``SPD-as-a-dimension'' presentations:

\begin{enumerate}
\item \textbf{A canonical ambient space (so codimension is meaningful).}
Classical SPD arguments usually track only $\dim(\mathsf{SPD}_{\kappa,\ell}(p))$.
In contrast, SPDP fixes an explicit ambient coefficient space $W_{\kappa,\ell}(p)$ and a
canonical monomial basis $B_{\kappa,\ell}$ (\ref{subsec:ambient-basis}), so we can
also measure the \emph{ambient deficit}
\[
\mathrm{codim}_{\kappa,\ell}(p)
\;:=\;
\dim(W_{\kappa,\ell}(p)) - \Gamma_{\kappa,\ell}(p).
\]
This dual invariant is useful when the right statement is not merely
``the span is large'', but rather ``the span fails to fill the ambient space by a
rigid amount'', which is the form needed in codimension-style rigidity arguments.

\item \textbf{Explicit matrix objects (so transformations become linear-algebraic).}
By working with the explicit coefficient matrix $M_{\kappa,\ell}(p)$, invariance claims
become statements about rank preservation under concrete row/column operations
(e.g.\ symmetry actions and representation changes), rather than being left at the
level of ``the dimension is invariant''.
This is why monotonicity and robustness properties can be stated and proved cleanly
(Propositions in \ref{sec:foundational-properties}--\ref{sec:width-implies-rank}).

\item \textbf{Compatibility with restrictive/compiled variants used elsewhere.}
Separate work~\citep{Edwards2025} operates in a more
restrictive (canonical) compiled setting in which one may apply additional
structured projections and/or partitions before taking ranks.
Nothing in the present paper relies on that compiled setting; this subsection is
included only as a compatibility/notation note.
Those strengthened variants are \emph{refinements of the same core object}:
one still measures the rank of a (possibly projected/blocked) shifted-partial
coefficient matrix, and the basic algebra in \eqref{eq:spdp-equals-spd-span}--%
\eqref{eq:spdp-rank-equals-spddim} remains the reference point for consistency.
(See \ref{subsec:compat-spdp-cew} for how the standalone SPDP formalism here
aligns with the compiled SPDP rank used in \citet{Edwards2025}.)
\end{enumerate}

\paragraph{Takeaway}
If the reader already knows SPD, they can read ``SPDP rank'' as
``SPD dimension, written as an explicit coefficient-matrix rank''.
The real additions are (i) a fixed ambient convention enabling codimension,
(ii) a uniform matrix language that makes invariance and model upper bounds
transparent, and (iii) a clean interface for the more restrictive compiled
variants used in separation arguments.

\section{SPDP matrices, rank, and codimension}
\label{sec:spdp-matrix-rank-codim}

\subsection{Ambient space and basis convention (for codimension)}
\label{subsec:ambient-basis}

Throughout, codimension is taken relative to a fixed ambient coefficient space.
In the Boolean setting we adopt the multilinear convention and work in the quotient
ring
\[
  \F[x_1,\dots,x_n]/\langle x_1^2-x_1,\dots,x_n^2-x_n\rangle,
\]
so every polynomial has a unique multilinear representative.

Fix parameters $(\kappa,\ell)$ and let $D:=\max\{0,\deg(p)-\kappa+\ell\}$. We define the
ambient space
\[
  W_{\kappa,\ell}(p) \;:=\; \{ \text{multilinear polynomials of total degree }\le D \}
\]
and take as canonical basis $B_{\kappa,\ell}$ the set of all multilinear monomials of
degree $\le D$. We write
\[
  N_{\kappa,\ell}(p) \;:=\; \dim_\F(W_{\kappa,\ell}(p)) \;=\; |B_{\kappa,\ell}|.
\]

\begin{remark}
All SPDP matrices below are defined using the fixed basis $B_{\kappa,\ell}$ (unless stated
otherwise). This makes $\mathrm{codim}_{\kappa,\ell}(p)=N_{\kappa,\ell}(p)-\Gamma_{\kappa,\ell}(p)$
canonical for the chosen $(\kappa,\ell)$ and embedding scheme.
\end{remark}

\subsection{SPDP row generators}
For an index set $S \subseteq [n]$ we define the mixed partial derivative
\[
  \partial_S p \;:=\; \displaystyle\prod_{i \in S} \partial_{x_i} p.
\]
(Similarly for multi-indices if desired.)
Let $\mathcal{M}_{\le \ell}$ denote the set of monomials in $\F[x_1,\dots,x_n]$ of
total degree at most $\ell$ (or the corresponding multilinear monomial set, if working
modulo the Boolean ideal).

\begin{definition}[SPDP generating family]
\label{def:spdp-generators}
Fix parameters $(\kappa,\ell)$. The SPDP generating family of $p$ is
\[
  \mathcal{G}_{\kappa,\ell}(p)
  \;:=\;
  \{\, m \cdot \partial_S p \;:\; S \subseteq [n],\ |S| = \kappa,\ m \in \mathcal{M}_{\le \ell} \,\}.
\]
(One may also use $|S|\le \kappa$; this only changes constants and does not affect the
rank/codimension formalism below.)
\end{definition}

\begin{definition}[SPDP space]
\label{def:Vkl}
Define
\[
  V_{\kappa,\ell}(p) \;:=\; \mathrm{span}_\F\bigl(\mathcal{G}_{\kappa,\ell}(p)\bigr)\ \subseteq\ W_{\kappa,\ell}(p).
\]
\end{definition}

\subsection{Explicit SPDP matrix}
\begin{definition}[SPDP matrix]
\label{def:spdp-matrix}
Fix the canonical basis $B_{\kappa,\ell}$ for the ambient space $W_{\kappa,\ell}(p)$ as defined
in \ref{subsec:ambient-basis}.

The \emph{SPDP matrix} $M_{\kappa,\ell}(p)$ is the matrix over $\F$ defined as follows:
\begin{enumerate}[label=(\roman*),leftmargin=2.4em]
\item Rows are indexed by pairs $(S,m)$ where $S \subseteq [n]$ with $|S|=\kappa$
and $m \in \mathcal{M}_{\le \ell}$.
\item Columns are indexed by $B_{\kappa,\ell}$.
\item The entry in row $(S,m)$ and column $b \in B_{\kappa,\ell}$ is the coefficient of $b$
in the expansion of the polynomial $m \cdot \partial_S p$ in the basis $B_{\kappa,\ell}$.
\end{enumerate}
\end{definition}

\begin{remark}
$M_{\kappa,\ell}(p)$ is a finite matrix with entries in $\F$. All claims about its rank are
standard linear-algebraic statements (e.g.\ computable by Gaussian elimination).
\end{remark}

\subsection{Rank and codimension}
\begin{definition}[SPDP rank]
\label{def:spdp-rank}
The \emph{SPDP rank} of $p$ at parameters $(\kappa,\ell)$ is
\[
  \Gamma_{\kappa,\ell}(p) \;:=\; \rk\bigl(M_{\kappa,\ell}(p)\bigr)
  \;=\; \dim_\F V_{\kappa,\ell}(p).
\]
\end{definition}

\begin{definition}[SPDP codimension]
\label{def:spdp-codim}
Define the \emph{SPDP codimension} by
\[
  \codim_{\kappa,\ell}(p) \;:=\; N_{\kappa,\ell}(p) - \Gamma_{\kappa,\ell}(p)
  \;=\; \dim_\F W_{\kappa,\ell}(p) - \dim_\F V_{\kappa,\ell}(p).
\]
\end{definition}

\begin{remark}[Dual perspectives]
Rank is the primary SPD-style complexity statistic. Codimension is the dual viewpoint:
it measures the deficit from the full ambient space under the chosen embedding.
No information is lost: codimension and rank determine each other once $N_{\kappa,\ell}(p)$
is fixed.
\end{remark}

\subsection{Why introduce a blocked / compiled SPDP variant?}
\label{subsec:why-blocked-spdp}

The unblocked SPDP matrix $M_{\kappa,\ell}(p)$ is the most direct algebraic object:
it records the coefficient vectors of all shifted partials $m\cdot \partial_S p$ with
$|S|=\kappa$ and $\deg(m)\le \ell$.  In the companion complexity-theoretic applications,
however, the polynomials $p$ arise from \emph{structured encodings} (e.g.\ tableau/compile
templates) in which the variables are not homogeneous---they come in \emph{groups} that play
different roles (time step, tape cell, gate index, interface wires, etc.).
The blocked/compiled variant formalizes the idea that the rank measure should respect this
native grouping.

\paragraph{(1) The compiler induces a canonical variable partition.}
In compiled encodings, the variable set $[N]$ is naturally partitioned into blocks
$B=(B_1,\dots,B_r)$ determined by the template (for example, all variables belonging to a
single local gadget, interface window, or time-slice).  Many arguments on the P-side use only
the compiler-induced block partition: they do not depend on the internal ordering of variables inside a block,
only on \emph{which blocks interact} under the compiler-induced partition $B$.

\paragraph{(2) P-side upper bounds use block-level locality and profile compression.}
The key P-side collapse statements in the companion manuscript exploit that the compiled
formulas have bounded-width, bounded-interface interactions across blocks.
This allows a \emph{profile compression} argument that counts behavior up to block-anonymous
types rather than ordered variable-level histories.
Defining a blocked SPDP matrix $M^B_{\kappa,\ell}(p)$ makes this proof strategy explicit:
one restricts attention to the rows/columns that are actually relevant to the compiler-induced
block structure, and the resulting rank bound becomes both cleaner and more robust.

\paragraph{(3) NP-side lower bounds are unaffected (monotonicity).}
Blocking is a restriction operation (row/column deletion or an equivalent block-admissible
subfamily selection), so it cannot increase rank;
consequently $\Gamma^B_{\kappa,\ell}(p)\le \Gamma_{\kappa,\ell}(p)$ (\ref{prop:blocked-rank-monotone}), so any unblocked lower bound transfers to the blocked setting by monotonicity.
The blocked definition matches the object used in the compiled separation theorems.

\paragraph{(4) Practicality.}
The blocked variant also reduces the effective ambient dimension in symbolic or experimental
work, because one works with a structured submatrix determined by the compiler blocks rather
than the full unblocked coefficient matrix.
This makes sanity checks and small-scale computations feasible without altering the formal
definition of SPDP rank.

\subsection{Blocked / compiled SPDP variant (compatibility with the separation paper)}
\label{subsec:blocked-spdp-compat}

Fix a block partition $B=(B_1,\dots,B_r)$ of the variable set $[N]$.
The \emph{blocked} SPDP generator family is obtained by restricting attention to
derivatives and shifts whose support is measured at the block level.
Equivalently, one forms a submatrix $M^B_{\kappa,\ell}(p)$ of $M_{\kappa,\ell}(p)$
by keeping only those rows (and, if desired, only those columns) that are admissible
with respect to the block structure $B$ used by the compiler.
The \emph{blocked SPDP rank} is
\[
\Gamma^B_{\kappa,\ell}(p)\ :=\ \mathrm{rank}\!\left(M^B_{\kappa,\ell}(p)\right).
\]

\begin{proposition}[Rank monotonicity under blocking]
\label{prop:blocked-rank-monotone}
For every polynomial $p$ and every block partition $B$,
\[
\Gamma^B_{\kappa,\ell}(p)\ \le\ \Gamma_{\kappa,\ell}(p).
\]
\end{proposition}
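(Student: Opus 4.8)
The plan is to observe that blocking is, by construction, a row/column deletion operation applied to the unblocked matrix, and that deleting rows or columns can never increase matrix rank. This is the cleanest route because the proposition's own setup (in \ref{subsec:blocked-spdp-compat}) already defines $M^B_{\kappa,\ell}(p)$ as a \emph{submatrix} of $M_{\kappa,\ell}(p)$ obtained by keeping only block-admissible rows (and optionally columns). So the real content is just the standard linear-algebra fact that rank is monotone under passing to a submatrix, together with a careful check that the definitions line up.

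First I would make the submatrix relationship fully explicit: let $\mathcal{R}\subseteq\{(S,m)\}$ be the set of block-admissible row indices and $\mathcal{C}\subseteq B_{\kappa,\ell}$ the retained columns, so that $M^B_{\kappa,\ell}(p)=M_{\kappa,\ell}(p)[\mathcal{R},\mathcal{C}]$ is the restriction of the unblocked SPDP matrix to those rows and columns. Next I would invoke the elementary fact that for any matrix $M$ and any index subsets $\mathcal{R},\mathcal{C}$, we have $\rk(M[\mathcal{R},\mathcal{C}])\le\rk(M)$. I would justify this in two steps, each a one-line argument: deleting columns cannot increase rank because the column space of the restriction is (isomorphic to) a projection/restriction of the original, and deleting rows cannot increase rank because the row space of the restriction is a subspace of the original row space (its generators are a subset of the original rows). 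Chaining the two inequalities through the intermediate matrix $M_{\kappa,\ell}(p)[\mathcal{R},B_{\kappa,\ell}]$ gives $\Gamma^B_{\kappa,\ell}(p)=\rk(M^B_{\kappa,\ell}(p))\le\rk(M_{\kappa,\ell}(p))=\Gamma_{\kappa,\ell}(p)$, which is exactly the claim.

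Alternatively, and perhaps more in the spirit of the span-based viewpoint emphasized throughout the paper, I would phrase the row-deletion step via \ref{def:spdp-rank}: since $\Gamma_{\kappa,\ell}(p)=\dim_\F V_{\kappa,\ell}(p)$ and the blocked generators form a subfamily $\mathcal{G}^B_{\kappa,\ell}(p)\subseteq\mathcal{G}_{\kappa,\ell}(p)$, the blocked span $V^B_{\kappa,\ell}(p)=\mathrm{span}_\F(\mathcal{G}^B_{\kappa,\ell}(p))$ is a subspace of $V_{\kappa,\ell}(p)$, and dimension is monotone under inclusion of subspaces. This handles blocking-as-row-deletion very cleanly; the column-deletion case (if columns are also dropped) is then the only part needing the matrix-level argument above, since dropping columns corresponds to coordinate-projecting the span rather than passing to a subspace.

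The only subtlety---and thus the main thing to get right rather than a genuine obstacle---is that the notion of ``block-admissible'' is left informal in \ref{subsec:blocked-spdp-compat} (``admissible with respect to the block structure $B$''). I would therefore state the proof abstractly, assuming only the minimal property that $M^B_{\kappa,\ell}(p)$ is realized as a row-and-column submatrix of $M_{\kappa,\ell}(p)$, so that \emph{any} reasonable formalization of block-admissibility is covered. Since no step depends on the internal combinatorics of $B$, the argument is uniform in the partition and requires no parameter-regime hypotheses on $(\kappa,\ell)$. This matches the informal justification already given in item (3) of \ref{subsec:why-blocked-spdp}, namely that blocking is a restriction operation and hence rank-nonincreasing.
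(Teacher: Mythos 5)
Your proposal is correct and follows essentially the same route as the paper: the paper's proof is a one-line reduction to its Lemma~\ref{lem:submatrix-monotonicity} (submatrix rank monotonicity), which is exactly the fact you identify and prove inline via the row/column-deletion argument. Your span-based alternative for the row-deletion step and your abstraction over the informal notion of ``block-admissible'' are fine refinements but do not change the underlying argument.
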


\begin{proof}
Immediate from \ref{lem:submatrix-monotonicity}: $M^B_{\kappa,\ell}(p)$ is a submatrix of $M_{\kappa,\ell}(p)$.
\end{proof}

\paragraph{Dictionary.}
The blocked/compiled SPDP objects $(M^B_{\kappa,\ell},\Gamma^B_{\kappa,\ell})$ are the ones
used in the compiled separation theorems of \citet{Edwards2025}; all lower/upper
bounds stated for $\Gamma_{\kappa,\ell}$ transfer to $\Gamma^B_{\kappa,\ell}$ by
\ref{prop:blocked-rank-monotone}.

\section{Foundational properties}
\label{sec:foundational-properties}

\begin{lemma}[Submatrix monotonicity]
\label{lem:submatrix-monotonicity}
Let $M'$ be any submatrix of $M_{\kappa,\ell}(p)$ obtained by deleting rows and/or
columns. Then
\[
\operatorname{rank}(M') \;\le\; \Gamma_{\kappa,\ell}(p) \;=\; \operatorname{rank}(M_{\kappa,\ell}(p)).
\]
\end{lemma}

\begin{proof}
Write $M := M_{\kappa,\ell}(p) \in \mathbb{F}^{m\times n}$. Let $I\subseteq[m]$ be the
set of rows retained and $J\subseteq[n]$ the set of columns retained, so that the
submatrix $M'$ is precisely the $|I|\times |J|$ matrix obtained by restricting to
rows $I$ and columns $J$.

View $M$ as the linear map
\[
\phi:\mathbb{F}^n \to \mathbb{F}^m,\qquad \phi(x)=Mx.
\]
Let $\iota:\mathbb{F}^{|J|}\to\mathbb{F}^n$ be the coordinate inclusion that embeds
$\mathbb{F}^{|J|}$ as the coordinate subspace supported on $J$, and let
$\pi:\mathbb{F}^m\to\mathbb{F}^{|I|}$ be the coordinate projection onto the
coordinates in $I$. Then the submatrix $M'$ represents the composed linear map
\[
\phi' := \pi \circ \phi \circ \iota:\mathbb{F}^{|J|}\to\mathbb{F}^{|I|}.
\]
Indeed, for $y\in\mathbb{F}^{|J|}$, the vector $\iota(y)\in\mathbb{F}^n$ is the
full-length vector with zeros outside $J$, and $\pi(M\iota(y))$ keeps precisely the
coordinates in $I$, which is exactly multiplication by the submatrix $M'$.

Now $\mathrm{Im}(\phi') \subseteq \pi(\mathrm{Im}(\phi))$, hence
\[
\dim(\mathrm{Im}(\phi')) \;\le\; \dim(\mathrm{Im}(\phi)).
\]
But $\dim(\mathrm{Im}(\phi'))=\operatorname{rank}(M')$ and
$\dim(\mathrm{Im}(\phi))=\operatorname{rank}(M)=\Gamma_{\kappa,\ell}(p)$, so
$\operatorname{rank}(M')\le \Gamma_{\kappa,\ell}(p)$ as claimed.
\end{proof}

\begin{proposition}[Monotonicity in parameters]
\label{prop:monotonicity-params}
Fix $(\kappa,\ell)$ and a compatible ambient convention (in particular, a fixed
ambient coefficient space and monomial basis for each parameter choice).
\begin{enumerate}
\item[(i)] (\emph{Monotonicity in $\ell$}) If $\ell' \ge \ell$, then
\[
  \Gamma_{\kappa,\ell}(p) \;\le\; \Gamma_{\kappa,\ell'}(p).
\]
\item[(ii)] (\emph{Monotonicity in $\kappa$ for the $\le \kappa$ variant})
Define the cumulative SPDP generating family
\[
  G_{\le \kappa,\ell}(p)
  \;:=\;
  \{\, m\cdot \partial^{S}p \;:\; S\subseteq[n],\ |S|\le \kappa,\ m\in \mathcal{M}_{\le \ell}\,\},
\]
and let $\Gamma_{\le \kappa,\ell}(p)$ be the rank of the corresponding SPDP matrix.
Then if $\kappa'\ge \kappa$,
\[
  \Gamma_{\le \kappa,\ell}(p) \;\le\; \Gamma_{\le \kappa',\ell}(p).
\]
\end{enumerate}
\end{proposition}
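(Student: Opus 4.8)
The plan is to derive both parts from a single structural observation: increasing the relevant parameter only \emph{enlarges} the SPDP generating family, so its span can only grow, and since $\Gamma_{\kappa,\ell}(p)=\dim_\F V_{\kappa,\ell}(p)$ is by definition the dimension of that span (Definition \ref{def:spdp-rank}), the rank is monotone. Concretely, in each case I will exhibit a set inclusion of generating families and then pass to spans.

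For (i), fix $\kappa$ and suppose $\ell'\ge \ell$. Since $\mathcal{M}_{\le \ell}\subseteq \mathcal{M}_{\le \ell'}$, every generator $m\cdot\partial_S p$ with $|S|=\kappa$ and $\deg(m)\le \ell$ is also a generator at shift budget $\ell'$; hence $\mathcal{G}_{\kappa,\ell}(p)\subseteq \mathcal{G}_{\kappa,\ell'}(p)$ and therefore $V_{\kappa,\ell}(p)\subseteq V_{\kappa,\ell'}(p)$, which on taking dimensions gives $\Gamma_{\kappa,\ell}(p)\le \Gamma_{\kappa,\ell'}(p)$. Equivalently, in matrix language: because $D=\max\{0,\deg p-\kappa+\ell\}$ is nondecreasing in $\ell$, we have $W_{\kappa,\ell}(p)\subseteq W_{\kappa,\ell'}(p)$ with nested monomial bases $B_{\kappa,\ell}\subseteq B_{\kappa,\ell'}$, and each row of $M_{\kappa,\ell}(p)$ — expanded in the larger basis $B_{\kappa,\ell'}$, where after multilinear reduction it is supported only on $B_{\kappa,\ell}$ since $\deg(m\cdot\partial_S p)\le \ell+(\deg p-\kappa)=D$ — reappears (zero-padded on the new columns) as a row of $M_{\kappa,\ell'}(p)$. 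Submatrix monotonicity (Lemma \ref{lem:submatrix-monotonicity}) then yields the bound.

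Part (ii) is formally identical at the level of generators: if $\kappa'\ge \kappa$ then $\{S:|S|\le \kappa\}\subseteq\{S:|S|\le \kappa'\}$, so $G_{\le \kappa,\ell}(p)\subseteq G_{\le \kappa',\ell}(p)$, whence $\mathrm{span}_\F\bigl(G_{\le \kappa,\ell}(p)\bigr)\subseteq \mathrm{span}_\F\bigl(G_{\le \kappa',\ell}(p)\bigr)$ and the dimensions are ordered as claimed.

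The one point requiring care — and the main (mild) obstacle — is the ambient-space dependence, which behaves oppositely in the two parts. In (i) the ambient degree $D$ grows with $\ell$, so the ambient spaces and bases nest cleanly and the submatrix picture goes through verbatim. In (ii), by contrast, $D=\max\{0,\deg p-\kappa+\ell\}$ is \emph{nonincreasing} in $\kappa$, so the nominal ambient $W_{\le \kappa',\ell}$ may be \emph{smaller} than $W_{\le \kappa,\ell}$ and the column-nesting used in (i) is not available. The resolution is to argue at the level of spans rather than fixed ambient coordinates: $\Gamma=\dim_\F V$ is the intrinsic dimension of the span of the generating polynomials and does not depend on the choice of any sufficiently large ambient basis, so the family inclusion alone forces the rank inequality. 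This is precisely why the proposition is stated for the rank $\Gamma$ and not for $\mathrm{codim}$, whose value genuinely depends on the ambient convention and need not be monotone in the same direction.
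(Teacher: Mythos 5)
Your proposal is correct and takes essentially the same route as the paper: the inclusion of generating families ($\mathcal{M}_{\le \ell}\subseteq \mathcal{M}_{\le \ell'}$ for (i), $\{S:|S|\le \kappa\}\subseteq\{S:|S|\le \kappa'\}$ for (ii)) passes to an inclusion of spans, and the rank inequality follows because $\Gamma$ is the dimension of that span. Your explicit treatment of the ambient-space asymmetry in (ii) --- observing that $D=\max\{0,\deg p-\kappa+\ell\}$ shrinks as $\kappa$ grows, so the column-nesting/submatrix picture of (i) is unavailable and one must instead appeal to the intrinsic, basis-independent dimension of the span --- is in fact more careful than the paper, which disposes of (ii) by invoking ``the same span argument as in part (i)'' without addressing this point.
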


\begin{proof}
We use the fact that, under a fixed ambient convention, the SPDP matrix is a
coefficient matrix whose row span equals the linear span of the corresponding
generating polynomials.

\smallskip
\noindent
(i) Fix $\kappa$ and let $\ell' \ge \ell$. By definition,
\[
  G_{\kappa,\ell}(p)
  \;=\;
  \{\, m\cdot \partial^{S}p \;:\; S\subseteq[n],\ |S|=\kappa,\ m\in \mathcal{M}_{\le \ell}\,\},
\]
while
\[
  G_{\kappa,\ell'}(p)
  \;=\;
  \{\, m\cdot \partial^{S}p \;:\; S\subseteq[n],\ |S|=\kappa,\ m\in \mathcal{M}_{\le \ell'}\,\}.
\]
Since $\ell'\ge \ell$ we have $\mathcal{M}_{\le \ell}\subseteq \mathcal{M}_{\le \ell'}$,
and therefore
\[
  G_{\kappa,\ell}(p)\subseteq G_{\kappa,\ell'}(p).
\]
Taking linear spans over $\mathbb{F}$ gives
\[
  \mathrm{span}\bigl(G_{\kappa,\ell}(p)\bigr)\subseteq \mathrm{span}\bigl(G_{\kappa,\ell'}(p)\bigr).
\]
Let $M_{\kappa,\ell}(p)$ and $M_{\kappa,\ell'}(p)$ denote the SPDP matrices formed by writing
these generators as coefficient vectors in their respective ambient spaces.
By construction, the row span of $M_{\kappa,\ell}(p)$ is isomorphic to
$\mathrm{span}(G_{\kappa,\ell}(p))$, and similarly for $(\kappa,\ell')$.
Hence
\[
  \Gamma_{\kappa,\ell}(p)
  \;=\;
  \dim \mathrm{RowSpan}\bigl(M_{\kappa,\ell}(p)\bigr)
  \;\le\;
  \dim \mathrm{RowSpan}\bigl(M_{\kappa,\ell'}(p)\bigr)
  \;=\;
  \Gamma_{\kappa,\ell'}(p),
\]
as claimed.

\smallskip
\noindent
(ii) For the cumulative variant, if $\kappa'\ge \kappa$ then
\[
  \{S\subseteq[n]:|S|\le \kappa\}\subseteq \{S\subseteq[n]:|S|\le \kappa'\}.
\]
Therefore every generator allowed in $G_{\le \kappa,\ell}(p)$ is also allowed in
$G_{\le \kappa',\ell}(p)$, i.e.
\[
  G_{\le \kappa,\ell}(p)\subseteq G_{\le \kappa',\ell}(p).
\]
The same span argument as in part (i) yields
\[
  \Gamma_{\le \kappa,\ell}(p)
  \;=\;
  \dim \mathrm{span}\bigl(G_{\le \kappa,\ell}(p)\bigr)
  \;\le\;
  \dim \mathrm{span}\bigl(G_{\le \kappa',\ell}(p)\bigr)
  \;=\;
  \Gamma_{\le \kappa',\ell}(p).
\]
This proves the monotonicity in $\kappa$ for the $\le \kappa$ convention.
\end{proof}

\begin{proposition}[Basis invariance and permutation invariance]
\label{prop:basis-and-perm-invariance}
Fix $(\kappa,\ell)$ and the ambient coefficient space $W_{\kappa,\ell}(p)$ with canonical
basis $B_{\kappa,\ell}$.
\begin{enumerate}
\item[(i)] (\emph{Change of basis}) Replacing $B_{\kappa,\ell}$ by any other basis of
$W_{\kappa,\ell}(p)$ does not change $\Gamma_{\kappa,\ell}(p)$ (and hence does not change
$\mathrm{codim}_{\kappa,\ell}(p)$).
\item[(ii)] (\emph{Variable permutations}) If $\pi$ is a permutation of variables and
we interpret $p\circ \pi$ under the same multilinear/Boolean convention, then
\[
\Gamma_{\kappa,\ell}(p\circ \pi)=\Gamma_{\kappa,\ell}(p),
\qquad
\mathrm{codim}_{\kappa,\ell}(p\circ \pi)=\mathrm{codim}_{\kappa,\ell}(p).
\]
\end{enumerate}
\end{proposition}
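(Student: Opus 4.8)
The plan is to treat both parts as rank-preservation statements: part (i) is a pure change-of-basis fact about the ambient coefficient space, while part (ii) is witnessed by a simultaneous row/column permutation of $M_{\kappa,\ell}(p)$ induced by the variable relabeling. In both cases the conclusion for codimension follows for free once the rank and the ambient dimension $N_{\kappa,\ell}(p)$ are controlled, since $\codim_{\kappa,\ell}(p)=N_{\kappa,\ell}(p)-\Gamma_{\kappa,\ell}(p)$.

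For part (i), I would use that $\Gamma_{\kappa,\ell}(p)=\dim_\F V_{\kappa,\ell}(p)$ and that $V_{\kappa,\ell}(p)$ is an intrinsic subspace of $W_{\kappa,\ell}(p)$, defined as the span of the generating family $\mathcal{G}_{\kappa,\ell}(p)$ with no reference to any coordinatization. Its dimension therefore cannot depend on which basis is used to write coefficient vectors. Concretely, passing from $B_{\kappa,\ell}$ to another basis $B'$ amounts to right-multiplying $M_{\kappa,\ell}(p)$ by an invertible change-of-coordinates matrix, which preserves rank. Since $N_{\kappa,\ell}(p)=\dim_\F W_{\kappa,\ell}(p)$ is itself basis-independent, both $\Gamma_{\kappa,\ell}(p)$ and $\codim_{\kappa,\ell}(p)$ are unchanged.

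For part (ii), I would introduce the $\F$-algebra automorphism $\Phi_\sigma$ of $\F[x_1,\dots,x_n]$ realizing the variable permutation, $x_i\mapsto x_{\sigma(i)}$, so that $p\circ\pi=\Phi_\sigma(p)$; since the Boolean ideal $\langle x_i^2-x_i\rangle$ is $\sigma$-invariant, $\Phi_\sigma$ descends to the multilinear quotient. The three structural facts to verify are: (a) $\Phi_\sigma$ permutes the canonical basis $B_{\kappa,\ell}$ among itself, because a variable permutation sends each multilinear monomial of degree $\le D$ to a multilinear monomial of the same degree; (b) partial derivatives transform equivariantly, $\partial_{\sigma(S)}\Phi_\sigma(p)=\Phi_\sigma(\partial_S p)$, which follows entrywise from the commutation $\partial_{x_{\sigma(i)}}\Phi_\sigma=\Phi_\sigma\,\partial_{x_i}$; and (c) total degree is preserved, so the row constraints $|S|=\kappa$ and $\deg(m)\le\ell$ are preserved under $S\mapsto\sigma(S)$ and $m\mapsto\Phi_\sigma(m)$. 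Combining (b) and (c), every generator of $\mathcal{G}_{\kappa,\ell}(p\circ\pi)$ equals $\Phi_\sigma(m\cdot\partial_S p)=\Phi_\sigma(m)\cdot\partial_{\sigma(S)}\Phi_\sigma(p)$ for a unique generator $m\cdot\partial_S p$ of $\mathcal{G}_{\kappa,\ell}(p)$, with $(S,m)\mapsto(\sigma(S),\Phi_\sigma(m))$ a bijection of the row index set.

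I would then conclude that, in the fixed basis $B_{\kappa,\ell}$, the matrix $M_{\kappa,\ell}(p\circ\pi)$ is obtained from $M_{\kappa,\ell}(p)$ by permuting rows via $(S,m)\mapsto(\sigma(S),\Phi_\sigma(m))$ and permuting columns via the action of $\Phi_\sigma$ on $B_{\kappa,\ell}$ from fact (a). These are left and right multiplication by permutation matrices, hence rank-preserving, giving $\Gamma_{\kappa,\ell}(p\circ\pi)=\Gamma_{\kappa,\ell}(p)$. Since $D=\max\{0,\deg(p)-\kappa+\ell\}$ depends on $p$ only through $\deg(p)$ and $\deg(p\circ\pi)=\deg(p)$, we get $N_{\kappa,\ell}(p\circ\pi)=N_{\kappa,\ell}(p)$, and therefore $\codim_{\kappa,\ell}(p\circ\pi)=\codim_{\kappa,\ell}(p)$. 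The main obstacle is bookkeeping in the equivariance check (b): one must confirm that the permutation of the derivative support $S$ and of the shift monomial $m$ matches the permutation of monomial columns coherently, and that reduction modulo the Boolean ideal commutes with $\Phi_\sigma$, so that the entire diagram---derivative, shift, multilinear reduction, and coefficient extraction---is $\Phi_\sigma$-equivariant. Once that equivariance is in hand, the rank conclusion is simply invariance of rank under simultaneous row and column permutation.
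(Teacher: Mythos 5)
Your proposal is correct and follows essentially the same route as the paper: part (i) via rank invariance under an invertible change of ambient coordinates (the paper writes this as $M = M'T^{\top}$), and part (ii) via the substitution automorphism (the paper's map $\Pi(q)=q\circ\pi$, your $\Phi_\sigma$), the chain-rule equivariance of partial derivatives, and the resulting simultaneous row/column permutation $M_{\kappa,\ell}(p\circ\pi)=R\,M_{\kappa,\ell}(p)\,C$ with permutation matrices. Your explicit checks that the Boolean ideal is permutation-invariant and that $N_{\kappa,\ell}(p\circ\pi)=N_{\kappa,\ell}(p)$ via $\deg(p\circ\pi)=\deg(p)$ are slightly more careful than the paper's corresponding remarks, but the argument is the same.
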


\begin{proof}
Throughout, $M_{\kappa,\ell}(p)$ denotes the SPDP coefficient matrix whose rows are the
coefficient vectors of the generators in $G_{\kappa,\ell}(p)$ written in the chosen
basis of the ambient space $W_{\kappa,\ell}(p)$. Recall that
$\Gamma_{\kappa,\ell}(p)=\operatorname{rank}(M_{\kappa,\ell}(p))$ and
$\mathrm{codim}_{\kappa,\ell}(p)=\dim(W_{\kappa,\ell}(p))-\Gamma_{\kappa,\ell}(p)$.

\smallskip
\noindent
(i) Let $B$ and $B'$ be two bases of the same ambient vector space $W:=W_{\kappa,\ell}(p)$.
Let $T:W\to W$ be the (unique) change-of-coordinates linear isomorphism that maps
$B'$-coordinates to $B$-coordinates. Concretely, if a vector $v\in W$ has coordinate
column $[v]_{B'}$ in basis $B'$, then its coordinate column in basis $B$ is
\[
[v]_B \;=\; T\,[v]_{B'},
\]
where $T\in \mathbb{F}^{N\times N}$ is invertible ($N=\dim W$).
Form the SPDP matrix $M$ using basis $B$ and the SPDP matrix $M'$ using basis $B'$,
with the \emph{same} ordered generating family (so the same rows as polynomials,
only their coordinate columns change). Then each row vector transforms by the same
invertible change of coordinates, and hence the full matrices satisfy
\[
M \;=\; M' \, T^{\top}.
\]
(Here we use the convention that rows are coefficient \emph{row} vectors; if one
stores coefficients as columns instead, the transpose disappears.)
Since $T^{\top}$ is invertible, right-multiplication by $T^{\top}$ preserves rank:
\[
\operatorname{rank}(M) \;=\; \operatorname{rank}(M' T^{\top})
\;=\; \operatorname{rank}(M').
\]
Therefore $\Gamma_{\kappa,\ell}(p)$ is independent of the choice of basis of $W$.
Finally, $\dim(W)$ is basis-independent, so
$\mathrm{codim}_{\kappa,\ell}(p)=\dim(W)-\Gamma_{\kappa,\ell}(p)$ is also invariant.

\smallskip
\noindent
(ii) Let $\pi$ be a permutation of variables. Consider the induced linear map on the
ambient coefficient space
\[
\Pi:W_{\kappa,\ell}(p)\to W_{\kappa,\ell}(p\circ \pi),
\qquad
\Pi(q)\;:=\; q\circ \pi.
\]
Under the fixed ambient convention (same multilinear/Boolean embedding and the
same degree cutoffs defining $W_{\kappa,\ell}(\cdot)$), $\Pi$ is a well-defined linear
isomorphism: it permutes monomials in the canonical monomial basis, hence is
represented by a permutation matrix on coordinates.

Next, we compare the generating families. For any $S\subseteq[n]$ with $|S|=\kappa$,
the chain rule for variable relabeling gives
\[
\partial^{S}(p\circ \pi) \;=\; (\partial^{\pi(S)} p)\circ \pi,
\]
and multiplying by a monomial $m$ (interpreted under the same convention) yields
\[
m \cdot \partial^{S}(p\circ \pi)
\;=\;
\bigl((m\circ \pi^{-1})\cdot \partial^{\pi(S)}p\bigr)\circ \pi.
\]
Thus, as $S$ ranges over $\kappa$-subsets and $m$ ranges over $\mathcal{M}_{\le \ell}$,
the SPDP generator set for $p\circ \pi$ is exactly the image under $\Pi$ of the
SPDP generator set for $p$, up to a relabeling of indices:
\[
G_{\kappa,\ell}(p\circ \pi) \;=\; \Pi\bigl(G_{\kappa,\ell}(p)\bigr).
\]
In coordinates, this means the SPDP matrix for $p\circ \pi$ is obtained from the
SPDP matrix for $p$ by applying permutation matrices on rows (reordering the
generators) and columns (relabeling the monomial basis). Concretely, there exist
permutation matrices $R$ and $C$ such that
\[
M_{\kappa,\ell}(p\circ \pi) \;=\; R\, M_{\kappa,\ell}(p)\, C.
\]
Permutation matrices are invertible, so they preserve rank:
\begin{align*}
\Gamma_{\kappa,\ell}(p\circ \pi)
&= \operatorname{rank}(M_{\kappa,\ell}(p\circ \pi))
= \operatorname{rank}(R\,M_{\kappa,\ell}(p)\,C) \\
&= \operatorname{rank}(M_{\kappa,\ell}(p))
= \Gamma_{\kappa,\ell}(p).
\end{align*}
Finally, $\dim(W_{\kappa,\ell}(p\circ \pi))=\dim(W_{\kappa,\ell}(p))$ because $\Pi$ is an
isomorphism, hence codimension is preserved:
\begin{align*}
\mathrm{codim}_{\kappa,\ell}(p\circ \pi)
&= \dim(W_{\kappa,\ell}(p\circ \pi))-\Gamma_{\kappa,\ell}(p\circ \pi) \\
&= \dim(W_{\kappa,\ell}(p))-\Gamma_{\kappa,\ell}(p)
= \mathrm{codim}_{\kappa,\ell}(p).
\end{align*}
\end{proof}

\begin{remark}[On general affine maps]
In the multilinear/Boolean convention, a general affine substitution $x\mapsto Ax+b$
need not preserve multilinearity or the Boolean ideal. Therefore we only claim invariance
under variable permutations (and other Boolean-preserving renamings, if desired) unless
one explicitly works in a different ambient space that is closed under the substitution.
\end{remark}

\section{Polynomial width implies polynomial SPDP rank}
\label{sec:width-implies-rank}

This section gives the main structural upper bound of the paper:
if a polynomial admits a local (radius-1) description with bounded live-interface width,
then its SPDP rank at logarithmic parameters is polynomially bounded. The width-based
approach draws on ideas from algebraic branching program complexity~\citep{Nisan1991,FS2013,SY2010}.

\subsection{Local-width computation model (definitions)}
\label{subsec:model-defs}

We formalize the minimal computational structure needed for the Width$\Rightarrow$Rank
upper bound.

\paragraph{Relation to standard models (Algebraic Branching Programs / finite-state local width).}
The local-width computation model in Section~7 can be read as a bounded-width,
finite-state Algebraic Branching Program (ABP) executed over a block-local
representation: the computation maintains a set of \emph{interfaces} (register-like
ports) partitioned into constant-size blocks, each interface carries a type from a
finite alphabet, and each primitive step updates only a constant-radius neighborhood.
The width $R$ is the maximum number of simultaneously live interfaces, directly
analogous to ABP width.

The ``deterministic compiler'' hypothesis is formalized by the existence of a
confluent terminating rewrite system for local update words (normal forms of length
$q=O(1)$). In particular, in any length-$\kappa$ window, each live interface contributes
only an $O(1)$-length normal form, so the window is summarized (up to interface
renaming) by a \emph{histogram/profile} over a constant alphabet; this is exactly the
mechanism behind the profile-compression lemma and the resulting $R^{O(1)}$ profile
count independent of $\kappa$.

Finally, the ``diagonal-basis'' assumption is the statement that the ambient
coefficient space admits a block-factorable basis in which (i) each constant-size
block contributes a constant-dimensional local factor space and (ii) windows of a
fixed profile span a polylog-dimensional subspace $V_h$. This is the familiar
block/tensor-factor phenomenon that appears in width-based ABP analyses when the
computation is local and the basis aligns with the block structure.

\begin{definition}[Interfaces, blocks, and live set]
A computation maintains a finite set of \emph{interfaces} (ports) indexed by a set
$\mathcal{I}$. Interfaces are partitioned into \emph{blocks}
$\mathcal{I}=\bigsqcup_{t=1}^T \mathcal{I}_t$.

At each step $t$ of the computation, a subset $\mathcal{L}_t\subseteq \mathcal{I}$ of
interfaces is \emph{live}. The \emph{width} is
\[
  R \;:=\; \max_t |\mathcal{L}_t|.
\]
\end{definition}

\begin{definition}[Radius-1 locality]
A primitive operation acts on at most $b=O(1)$ interfaces, and only within a single block
or within a constant-size neighbourhood of blocks (as specified by the representation).
Equivalently, each primitive operation modifies the local state of only $O(1)$ live
interfaces.
\end{definition}

\begin{definition}[Finite alphabet and local update words]
Each live interface carries a local type/state in a finite alphabet $\Sigma$ with
$|\Sigma|=O(1)$. A window of $\kappa$ derivative steps induces, at each live interface,
a word over $\Sigma$ describing the sequence of local type updates seen by that interface.
\end{definition}

\begin{definition}[Canonical windows and normal forms]
Two windows are equivalent if they differ only by commuting independent (disjoint-support)
steps. We assume local update words admit a confluent terminating rewrite system, so each
local word has a unique normal form of length at most $q=O(1)$.
\end{definition}

\begin{definition}[Profile histogram]
Let $\Sigma^{\le q}=\bigcup_{j=0}^q \Sigma^j$. The \emph{profile} of a canonical window is
the histogram $h:\Sigma^{\le q}\to \N$ giving the number of live interfaces whose local
normal form equals each $\sigma\in\Sigma^{\le q}$, so $\sum_{\sigma} h(\sigma)\le R$.
\end{definition}

\subsection{Setting and assumptions}
\label{subsec:setting-assumptions}

We assume the polynomial $p$ is represented in a local basis in which primitive operations
act on a constant-size neighbourhood (radius $1$) and each interface carries a local type from
a finite alphabet. Fix a block partition of interfaces (as induced by the compilation / locality
structure). The quantitative assumptions are:

\begin{enumerate}
\item[(A1)] Radius-1 locality.
Each primitive operation touches at most $b\in\mathbb{N}$ block interfaces, where
$b = O(1)$ depends only on the representation\slash compilation scheme.

\item[(A2)] Finite local alphabet.
The effect of a primitive operation on a single interface is determined by a local type
$\tau \in \Sigma$, where $|\Sigma| = S = O(1)$ is an absolute constant.

\item[(A3)] Width bound.
At every step, at most $R$ interfaces are live, where typically $R = C(\log n)^c$ for absolute
constants $C,c>0$.

\item[(A4)] SPDP parameters.
We use derivative order $\kappa$ and degree guard $\ell$ with $\kappa,\ell = \Theta(\log n)$.
\end{enumerate}

All hidden constants depend only on the local representation\slash compilation scheme, not on
$n,\kappa,\ell$.

\subsection{Canonical windows, normal forms, and profiles}
\label{subsec:canonical-windows-profiles}

A length-$\kappa$ window is a sequence of $\kappa$ successive directional derivative steps applied to $p$.
We pass to canonical representatives via:

\begin{enumerate}
\item[(C1)] Commutation on disjoint support.
If two derivative steps act on disjoint interface sets, their order is immaterial; windows differing
only by commuting such steps are identified.

\item[(C2)] Local normal form.
Local type-updates at a fixed interface generate a finite monoid (equivalently: admit a finite,
length-decreasing rewrite system that is confluent and terminating). Hence every local update word
has a unique normal form of length at most $q = O(1)$, depending only on the scheme.
\end{enumerate}

Define the set of local type words of length at most $q$ by
\[
  \Sigma^{\le q} \;:=\; \bigcup_{j=0}^{q} \Sigma^j,
  \qquad\text{and set}\qquad
  S' := |\Sigma^{\le q}| = O(1).
\]

\begin{definition}[Interface-anonymous profile]
\label{def:interface-anonymous-profile}
The profile of a canonical window is the histogram
\[
  h : \Sigma^{\le q} \to \mathbb{N}
\]
that counts, for each local type word $\sigma \in \Sigma^{\le q}$, the number of live interfaces
whose local normal form equals $\sigma$. In particular,
\[
  \sum_{\sigma \in \Sigma^{\le q}} h(\sigma) \le R.
\]
\end{definition}

\begin{lemma}[Permutation invariance within blocks]
\label{lem:block-perm-invariance}
Fix a block partition $\mathcal{B}=\{B_1,\dots,B_m\}$ of the interface coordinates.
Let $w$ and $w'$ be two canonical windows that differ only by a permutation of
interface identities \emph{within each block} (i.e.\ there exist permutations
$\pi_j$ on $B_j$ such that $w' = (\pi_1\oplus\cdots\oplus\pi_m)\cdot w$).
Then the SPDP row sets generated by $w$ and $w'$ are related by invertible
block-structured changes of basis: there exist invertible matrices $L(w\to w')$
and $R(w\to w')$ (depending only on the within-block permutation) such that the
corresponding window-level SPDP matrices satisfy
\[
M(w') \;=\; L(w\to w') \, M(w)\, R(w\to w').
\]
In particular, $\rank(M(w'))=\rank(M(w))$, so $w$ and $w'$ contribute the same rank.
Consequently, SPDP upper bounds depend only on the profile histogram $h$
(\ref{def:interface-anonymous-profile}).
\end{lemma}

\begin{proof}
We make explicit the standard fact that within-block permutations act as
(per-block) coordinate relabelings on both (i) the derivative indices and (ii) the
monomial/basis indices used to form coefficient/evaluation vectors. Since these are
relabelings, they are represented by permutation matrices, hence are invertible and
rank-preserving.

\paragraph{Step 1: Local (single-block) permutation acts by permutation matrices.}
Fix a block $B_j$. By construction of the canonical window representation, the
contribution of block $B_j$ to the window-level SPDP row vectors is computed from a
\emph{local} tensor/matrix $T_j(w)$ whose rows are indexed by the local derivative
choices (and/or local feature rows), and whose columns are indexed by the local
ambient basis elements (local monomials, local evaluation coordinates, etc.),
under the ambient convention fixed in \ref{subsec:ambient-basis}.

Let $\pi_j$ be a permutation of the interface identities inside $B_j$, and let
$w'_j=\pi_j\cdot w_j$ be the induced relabeling of the $j$th local word/window.
Because $\pi_j$ only relabels coordinates inside $B_j$, it induces:
\begin{itemize}
\item a permutation of the local row indices (derivative/evaluation indices), and
\item a permutation of the local column indices (local basis/monomial indices).
\end{itemize}
Therefore there exist permutation matrices $P_j$ and $Q_j$ (depending only on $\pi_j$)
such that
\begin{equation}
\label{eq:local-perm}
T_j(w') \;=\; P_j \, T_j(w)\, Q_j.
\end{equation}
Permutation matrices are invertible, hence $\rank(T_j(w'))=\rank(T_j(w))$.

\paragraph{Step 2: How the global window matrix is assembled from local pieces.}
The window-level SPDP matrix $M(w)$ is obtained by combining the per-block local
pieces $\{T_j(w)\}_{j=1}^m$ using the same fixed multilinear assembly rule as in
the model definition (\ref{subsec:model-defs}): concretely, this is a
blockwise tensor product / columnwise Kronecker product (Khatri--Rao), possibly
followed by fixed row/column selections that enforce the global $(\kappa,\ell)$ regime.

To keep notation concrete, write the \emph{raw} assembled matrix as
\[
\widetilde{M}(w)
\;:=\;
T_1(w)\ \odot\ T_2(w)\ \odot\ \cdots\ \odot\ T_m(w),
\]
where $\odot$ is the Khatri--Rao product (columnwise Kronecker product), which is
the standard way to assemble block-local feature/evaluation tensors into a global
feature/evaluation tensor while keeping a shared column index (the global
monomial/basis index is the product of the local ones).
The actual $M(w)$ used for rank can be written as
\begin{equation}
\label{eq:selection}
M(w) \;=\; S_{\mathrm{row}} \,\widetilde{M}(w)\, S_{\mathrm{col}},
\end{equation}
for fixed row/column selection (or restriction) matrices $S_{\mathrm{row}},S_{\mathrm{col}}$
that depend only on $(\kappa,\ell)$ and the ambient convention (not on $w$).

\paragraph{Step 3: Permutations push through the assembly rule as invertible maps.}
Using \eqref{eq:local-perm} in each block and the basic Khatri--Rao identities:
\begin{align}
(PA)\odot(RB) &= (P\otimes R)\,(A\odot B), \label{eq:kr-left}\\
(AQ)\odot(BQ) &= (A\odot B)\,Q, \label{eq:kr-right}
\end{align}
(valid whenever the matrices have the same number of columns, which holds in the
canonical window construction), we obtain
\[
\widetilde{M}(w')
\;=\;
\bigl(P_1T_1(w)Q_1\bigr)\odot\cdots\odot\bigl(P_mT_m(w)Q_m\bigr)
\;=\;
\Bigl(\bigotimes_{j=1}^m P_j\Bigr)\,
\widetilde{M}(w)\,
Q,
\]
where $Q$ is a permutation matrix implementing the induced relabeling of the shared
column index (in the simplest common case $Q_1=\cdots=Q_m=:Q$, and then $Q$ is exactly
that common permutation; more generally one obtains a permutation of the global basis
index determined by the local relabelings). In all cases, $Q$ is invertible.

Now apply the fixed selections \eqref{eq:selection}. Since selections commute with
row/column permutations up to restricting the permutation to the selected indices,
there exist (restricted) permutation matrices $\widehat{P}$ and $\widehat{Q}$ such that
\[
M(w') \;=\; \widehat{P}\, M(w)\, \widehat{Q}.
\]
Setting $L(w\to w'):=\widehat{P}$ and $R(w\to w'):=\widehat{Q}$ gives the claimed
left/right invertible relation. Hence
\[
\rank(M(w'))=\rank(\widehat{P}M(w)\widehat{Q})=\rank(M(w)).
\]

\paragraph{Step 4: Dependence only on the profile histogram.}
By \ref{def:interface-anonymous-profile}, the histogram $h$ records, for each local
profile type, how many blocks/windows realize that type, \emph{ignoring the ordering of
interface identities within blocks}. The argument above shows that any two canonical
windows with the same multiset of block-local words (equivalently the same histogram)
produce SPDP matrices related by invertible block-structured row/column transformations,
and therefore have identical rank contributions. Consequently, any rank upper bound
obtained by aggregating over windows depends only on $h$.
\end{proof}

\begin{lemma}[Constant local change budget]
\label{lem:constant-local-change-budget}
Assume (A1) and (C2). Then there exists a constant $q=O(1)$, independent of
$n,\kappa,\ell$, such that in any canonical window, each live interface undergoes at
most $q$ local type changes.
\end{lemma}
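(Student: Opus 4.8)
The plan is to obtain the bound directly from the normal-form hypothesis (C2), using radius-$1$ locality (A1) only to legitimize the per-interface localization. First I would fix an arbitrary canonical window $w$ and an arbitrary live interface $i$, and isolate the \emph{local update word} $u_i$: the subsequence of primitive steps of $w$ whose support contains $i$, read in window order and recorded as a word over the local update alphabet. The role of (A1) here is precisely to make this projection legitimate and genuinely local: since each primitive step touches at most $b=O(1)$ interfaces, steps whose support excludes $i$ leave the local state of $i$ untouched and do not enter $u_i$, so $u_i$ is an unambiguous word.

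Second, I would argue that in a canonical window $u_i$ is already in normal form. Any two steps that both touch $i$ have overlapping support, so the disjoint-support commutation used to form canonical representatives never reorders them relative to each other; hence the ordered word $u_i$ is a well-defined invariant of the commutation class, to which canonicalization then applies the confluent terminating rewrite of (C2). By (C2) the normal form of any local update word has length at most $q=O(1)$, a constant of the finite scheme-fixed local monoid, independent of $n,\kappa,\ell$. Therefore $|u_i|\le q$. I would then bound the number of local type changes at $i$ by $|u_i|$: each letter of $u_i$ effects a single local update, so the sequence of local states visited by $i$ has at most $|u_i|+1$ entries and hence at most $|u_i|\le q$ transitions, with identity updates contributing no change and only lowering the count. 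As $i$ was arbitrary, every live interface undergoes at most $q$ type changes.

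I expect the only genuine content, as opposed to bookkeeping, to be the $\kappa$-independence of $q$: a priori interface $i$ could be touched up to $\kappa=\Theta(\log n)$ times, so the raw per-interface word could have length $\Theta(\kappa)$, yet the conclusion must be $O(1)$. This collapse is exactly what (C2) buys, because the local updates generate a \emph{finite} monoid with a length-decreasing confluent rewrite system, so every product reduces to one of finitely many normal forms of length $\le q$. The two points I would verify carefully are (i) that the canonical-window per-interface word genuinely coincides with the (C2) normal form (which is why I first check that same-interface steps are never commuted past one another), and (ii) that ``type change'' is counted by normal-form length rather than by the pre-reduction step count, which holds because reduction only deletes redundant or cancelling updates and hence never increases the number of state transitions.
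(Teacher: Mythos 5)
Your proposal is correct and follows essentially the same route as the paper's proof: localize to the per-interface update word via (A1), invoke the (C2) confluent terminating rewrite to bound the canonical (normal-form) word length by a scheme-dependent constant $q$, and then bound the number of type changes by that length. Your extra check that disjoint-support commutation never reorders two steps touching the same interface (so the per-interface word is a well-defined invariant of the commutation class) is a detail the paper leaves implicit, but it does not change the argument.
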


\begin{proof}
Fix a canonical window $w$ and a live interface coordinate $i$ within that window.
By (A1) (constant-radius locality), the evolution of the local \emph{type} at
interface $i$ is influenced only by a constant-size neighborhood $N(i)$ (e.g.,
all interface coordinates within the fixed interaction radius), where
\[
|N(i)| \le b = O(1).
\]
Let $\Sigma$ denote the finite set of local types/states (finite alphabet), so the
local configuration space around $i$ is $\Sigma^{N(i)}$, whose size is
$|\Sigma|^{|N(i)|}=O(1)$.

Each elementary local update step inside the window acts only on this neighborhood,
and can therefore be modeled as a function
\[
u:\Sigma^{N(i)} \to \Sigma^{N(i)}.
\]
Let $\mathcal{U}$ be the (finite) set of all such admissible local update
generators. Finiteness of $\mathcal{U}$ follows because both the neighborhood size
and alphabet are constant (hence only finitely many distinct local maps exist).
Let $\mathcal{U}^*$ be the free monoid of words over $\mathcal{U}$, and let
$\langle \mathcal{U}\rangle$ be the transformation monoid they generate under
composition.

Now consider the sequence of local updates applied to neighborhood $N(i)$ as we
scan across the window $w$. This yields a word
\[
\mathbf{u}(w,i) = u_1 u_2 \cdots u_T \in \mathcal{U}^*,
\]
whose action on local configurations is the composed map
\[
U(w,i) := u_T \circ \cdots \circ u_2 \circ u_1 \in \langle \mathcal{U}\rangle.
\]

Assumption (C2) states that the local update monoid admits a canonical
\emph{normal form} reduction: there is a fixed rewriting/reduction procedure
$\mathrm{nf}(\cdot)$ on $\mathcal{U}^*$ such that:
\begin{enumerate}
\item $\mathrm{nf}(\mathbf{u})$ represents the same transformation as $\mathbf{u}$
      (i.e.\ it is equal in the generated monoid), and
\item the normal form is \emph{unique} and has length bounded by a constant:
      \[
      |\mathrm{nf}(\mathbf{u})| \le q
      \qquad\text{for some } q=O(1) \text{ depending only on the local model.}
      \]
\end{enumerate}
Because $w$ is a \emph{canonical window}, its local update description at each live
interface is, by definition/intent of the canonicalization step, already reduced to
this normal form. Equivalently, we may replace $\mathbf{u}(w,i)$ by
$\mathrm{nf}(\mathbf{u}(w,i))$ without changing the induced local transformation
$U(w,i)$.

Therefore the number of local update generators that effectively act on interface
$i$ within the canonical window is at most
\[
T_{\mathrm{eff}}(w,i) := |\mathrm{nf}(\mathbf{u}(w,i))| \le q.
\]
Each such generator application can change the local type of interface $i$ at most
once at the moment it is applied (and may also leave it unchanged), hence the
number of \emph{local type changes} experienced by interface $i$ across the window
is bounded by the number of effective generator applications:
\[
\#\{\text{type changes at } i \text{ within } w\}
\;\le\;
T_{\mathrm{eff}}(w,i)
\;\le\;
q.
\]

Finally, $q$ depends only on the constant-radius neighborhood and the finite local
rewrite/normal-form system from (C2), and thus is independent of $n,\kappa,\ell$.
\end{proof}


\begin{lemma}[Profile compression removes $\kappa$-dependence]
\label{lem:profile-compression-removes-kappa}
Assume (A1) and (C2). Fix any canonical window of length $\kappa$ with $R$ live
interfaces. Then there exists a constant $q=O(1)$, independent of $n,\kappa,\ell$,
such that each live interface $i$ admits a canonical (normal-form) local word
$\sigma_i \in \Sigma_{\le q}$ of length at most $q$, where
\[
\Sigma_{\le q} \;:=\; \bigcup_{t=0}^{q} \Sigma^{t}
\qquad\text{and}\qquad
S' := |\Sigma_{\le q}| = O(1).
\]
Consequently, the interface-anonymous profile of the window is completely
determined by the histogram
\[
h(\sigma) \;=\; \bigl|\{\, i \text{ live in the window} : \sigma_i=\sigma \,\}\bigr|
\qquad (\sigma\in\Sigma_{\le q}),
\]
and the number of realizable interface-anonymous profiles satisfies
\[
\#\mathrm{Profiles}
\;\le\;
\binom{R+S'-1}{S'-1}
\;=\;
R^{O(1)},
\]
which in particular is independent of $\kappa$.
\end{lemma}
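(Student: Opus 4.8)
The plan is to show that the two canonicalization steps already set up in the excerpt—normal-form reduction (C2) and interface anonymization—collapse the only source of $\kappa$-dependence, after which the profile count is a pure stars-and-bars computation over a constant-size type set. First I would invoke Lemma~\ref{lem:constant-local-change-budget}: under (A1) and (C2), every live interface $i$ in a canonical window reduces to a normal-form local word whose length is bounded by the absolute constant $q=O(1)$. This is exactly the assertion that the reduced word $\sigma_i$ lies in $\Sigma_{\le q}=\bigcup_{t=0}^{q}\Sigma^{t}$, independently of the window length $\kappa$; the bound $q$ is produced solely by the constant interaction radius $b$ from (A1) and the confluent terminating rewrite system from (C2), neither of which scales with $\kappa$.

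Next I would bound the cardinality of the local type-word alphabet. Since $|\Sigma|=S=O(1)$ by (A2) and $q=O(1)$ by the previous step, a geometric-series count gives $S'=|\Sigma_{\le q}|=\sum_{t=0}^{q}|\Sigma|^{t}=O(1)$, again with no $\kappa$-dependence. At this point each live interface has been assigned one of the finitely many ($S'$) normal-form types, so the interface-anonymous profile is precisely the histogram $h:\Sigma_{\le q}\to\N$ recording type multiplicities (Definition~\ref{def:interface-anonymous-profile}); that this histogram is the correct rank-relevant invariant—i.e.\ that reordering interface identities within a block leaves the rank contribution unchanged—is exactly what Lemma~\ref{lem:block-perm-invariance} supplies, so passing from ordered window data to the histogram loses no information needed for the upper bound.

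The final step is the counting bound. Because at most $R$ interfaces are live, the histogram satisfies $\sum_{\sigma\in\Sigma_{\le q}}h(\sigma)\le R$, so a realizable profile is a multiset of size at most $R$ drawn from the $S'$ normal-form types. Using that $\Sigma_{\le q}$ contains the empty word (the $t=0$ term) to absorb trivial interfaces, I would pad to multisets of size exactly $R$ and apply stars and bars, giving the stated bound $\binom{R+S'-1}{S'-1}$ (the at-most-$R$ convention instead yields $\binom{R+S'}{S'}$ via a slack bin; the two differ only by a $\kappa$-independent factor). Since $S'=O(1)$, either form is a fixed-degree polynomial in $R$, i.e.\ $R^{O(1)}$, and manifestly contains no factor of $\kappa$.

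I expect the only real subtlety—as opposed to routine calculation—to be conceptual rather than technical: making transparent why the crude ordered-step count $(\log n)^{O(\kappa)}$ is avoided. The point to emphasize is that $\kappa$ enters only through the raw length of the window, and both canonicalization steps neutralize it: normal forms cap each interface's effective word length at $q$ (killing the per-interface exponential in $\kappa$), while interface anonymization replaces the ordered tuple of up to $R$ words by an unordered histogram over a constant alphabet (killing the dependence on interface identities). Because neither $q$, nor $S'$, nor the stars-and-bars exponent $S'-1$ ever references $\kappa$, the bound is $\kappa$-free by construction, and no genuine obstacle remains once Lemma~\ref{lem:constant-local-change-budget} is in hand—the substantive work having been discharged there.
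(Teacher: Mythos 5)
Your proposal is correct and follows essentially the same route as the paper's proof: normal-form reduction via (A1)+(C2) to get words of length $\le q$, anonymization via Lemma~\ref{lem:block-perm-invariance}, and a stars-and-bars count of histograms over the constant-size alphabet $\Sigma_{\le q}$. The only cosmetic difference is that you cite Lemma~\ref{lem:constant-local-change-budget} for the length bound where the paper re-derives it inline (a substitution the paper itself makes in Lemma~\ref{lem:constant-type-profile-bound}), which is perfectly legitimate.
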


\begin{proof}
Fix a canonical window $w$ and a live interface coordinate $i$.
By (A1), only a constant-size neighborhood $N(i)$ (e.g.\ radius-$1$) can influence
the evolution at $i$. Thus the evolution of the local type at $i$ across the
window is described by a word over a finite set of local update generators acting
on $N(i)$.

By (C2), the local update monoid admits a unique normal form: every such word
reduces to a unique normal-form word of length at most $q=O(1)$, where $q$ depends
only on the local model (alphabet and neighborhood size), and hence is independent
of $n,\kappa,\ell$. Define $\sigma_i$ to be this normal form. This proves the first
claim.

Now define the profile histogram $h$ by counting how many live interfaces attain
each normal form $\sigma\in\Sigma_{\le q}$. Since each live interface contributes
exactly one $\sigma_i$, we have $\sum_{\sigma} h(\sigma)=R$.

By \ref{lem:block-perm-invariance} (Permutation invariance within blocks),
permuting interface identities within blocks induces invertible (block-diagonal)
row/column transformations on the corresponding SPDP matrices and therefore does
not change rank contributions. Hence, for the purposes of SPDP upper bounds, only
the interface-anonymous histogram $h$ matters.

Finally, the number of possible histograms $h:\Sigma_{\le q}\to\mathbb{Z}_{\ge 0}$
with total mass $R$ is the number of weak compositions of $R$ into $S'$ bins, which
is $\binom{R+S'-1}{S'-1} = R^{O(1)}$. This bound does not depend on $\kappa$.
\end{proof}

\begin{corollary}[Polynomially many profiles]
\label{cor:poly-many-profiles}
Under the assumptions of \ref{lem:profile-compression-removes-kappa},
the set $H$ of realizable interface-anonymous profiles has cardinality
$|H|\le R^{O(1)}$, independent of $\kappa$.
\end{corollary}

\begin{proof}
Immediate from \ref{lem:profile-compression-removes-kappa}.
\end{proof}

\begin{lemma}[Constant-type profile bound]
\label{lem:constant-type-profile-bound}
Assume (A1)--(A3) and (C1)--(C2). Let $R$ be the number of live interfaces in a
canonical window. Then the number of distinct interface-anonymous profiles
realizable by any canonical window is at most
\[
\#\mathrm{Profiles}
\;\le\;
\binom{qR+S'}{S'}
\;=\;
R^{O(1)},
\qquad
S' := |\Sigma_{\le q}| = O(1),
\]
where $\Sigma_{\le q}$ is the set of local normal-form words of length at most $q$,
and $q=O(1)$ is the constant from \ref{lem:constant-local-change-budget}.
In particular, this bound is independent of $\kappa$.
\end{lemma}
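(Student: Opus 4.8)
The plan is to treat this as a pure counting statement layered on top of the two preceding structural lemmas, the only genuinely new content being a slightly cruder (but more robust) mass bound that replaces the exact normalization $\sum_\sigma h(\sigma)=R$ of \ref{lem:profile-compression-removes-kappa} by the slack inequality $\sum_\sigma h(\sigma)\le qR$, which is what produces the $\binom{qR+S'}{S'}$ form. Because $q=O(1)$ and $S'=O(1)$ are both constants of the local model, the end result is polynomial in $R$ and manifestly free of any $\kappa$ dependence.

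First I would invoke \ref{lem:constant-local-change-budget}: under (A1) and (C2) each live interface undergoes at most $q=O(1)$ local type changes inside a canonical window and hence carries a unique normal-form local word $\sigma_i\in\Sigma_{\le q}$, where $S':=|\Sigma_{\le q}|=O(1)$ depends only on the alphabet size and the constant neighbourhood radius. The point to stress here is that the length cap $q$ does not grow with the window length $\kappa$, so the finite index set $\Sigma_{\le q}$ over which the profile histogram lives is fixed once the model is fixed; this is precisely where the $\kappa$-independence enters. I would then reduce to interface-anonymous profiles using \ref{lem:block-perm-invariance}: within-block permutations of interface identities act by invertible block-structured row/column transformations and therefore leave rank contributions unchanged, so a realizable window is recorded only through its histogram $h:\Sigma_{\le q}\to\mathbb{Z}_{\ge 0}$ in the sense of \ref{def:interface-anonymous-profile}.

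Next I would bound the total mass and count. Since there are at most $R$ live interfaces and each contributes at most $q$ counted normal-form symbols, one has $\sum_{\sigma\in\Sigma_{\le q}} h(\sigma)\le qR$. The number of histograms supported on $S'$ bins with total mass at most $qR$ is then counted by stars-and-bars with a slack bin: the nonnegative integer solutions of $x_1+\cdots+x_{S'}+s=qR$ number exactly $\binom{qR+S'}{S'}$, which is the stated bound. Finally, since $q,S'=O(1)$, this is $O\!\bigl((qR)^{S'}\bigr)=O(R^{S'})=R^{O(1)}$, with no factor depending on $\kappa$, completing the argument.

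The main obstacle I anticipate is bookkeeping rather than conceptual: making the mass normalization unambiguous. One must state explicitly whether $h$ counts interfaces (total mass exactly $R$) or counts per-interface normal-form positions / change-events (total mass at most $qR$), because the slack count $\binom{qR+S'}{S'}$ is only correct under the latter reading. I would therefore fix the convention up front — mass $\le qR$, justified by the $q$-bounded change budget of \ref{lem:constant-local-change-budget} — and add a one-line remark that this is merely a looser relative of the $\binom{R+S'-1}{S'-1}$ bound of \ref{lem:profile-compression-removes-kappa}, the two agreeing up to the constant factor $q$ in the common asymptotic $R^{O(1)}$. Everything else collapses to the elementary identity that the number of nonnegative integer solutions of $\sum_{j=1}^{S'}x_j\le M$ equals $\binom{M+S'}{S'}$.
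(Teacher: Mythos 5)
Your proposal is correct and follows essentially the same route as the paper's proof: both invoke \ref{lem:constant-local-change-budget} to obtain constant-length normal forms over the constant-size alphabet $\Sigma_{\le q}$, invoke \ref{lem:block-perm-invariance} to reduce to interface-anonymous histograms, and finish with a stars-and-bars count. The only difference is bookkeeping at the last step: the paper counts histograms of mass exactly $R$ (each live interface contributes exactly one word of $\Sigma_{\le q}$), obtaining $\binom{R+S'-1}{S'-1}$, and then matches the displayed form via the crude inequality $\binom{R+S'-1}{S'-1}\le\binom{qR+S'}{S'}$, whereas you inflate the mass budget to $qR$ and count with a slack variable directly.

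One correction to your hedging paragraph: the ambiguity you worry about does not exist, and your claim that the slack count $\binom{qR+S'}{S'}$ is \emph{only} correct under the event-counting reading is wrong. Definition \ref{def:interface-anonymous-profile} unambiguously counts interfaces, so the mass is exactly $R$; since $R\le qR$, every realizable profile is a histogram of mass at most $qR$, and the slack count is a valid (if loose) upper bound under the interface-counting reading as well --- indeed this containment is precisely how the paper arrives at the displayed binomial. Had you literally redefined the profile to count change-events, you would be counting a different object from the one in the lemma and would then owe an argument relating the two; fortunately your proof, instantiated with the paper's actual definition, needs no such detour.
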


\begin{proof}
Fix a canonical window $w$ with exactly $R$ live interfaces. By (C1) the local
alphabet/type set $\Sigma$ is finite. By \ref{lem:constant-local-change-budget},
under (A1) and (C2) there exists a constant $q=O(1)$ such that each live interface
admits a canonical (normal-form) local description of length at most $q$.
Equivalently, to each live interface $i$ in the window we may associate a word
\[
\sigma_i(w)\in \Sigma_{\le q},
\qquad
\Sigma_{\le q}:=\bigcup_{t=0}^q \Sigma^t,
\]
where $\Sigma^t$ denotes words of length $t$ over $\Sigma$.
Since $|\Sigma|=O(1)$ and $q=O(1)$, we have
\[
S' := |\Sigma_{\le q}|
= \sum_{t=0}^q |\Sigma|^t
= O(1),
\]
independent of $n,\kappa,\ell$.

\paragraph{Profiles as histograms.}
Define the interface-anonymous profile (histogram) of $w$ as the count vector
\[
h_w:\Sigma_{\le q}\to \mathbb{Z}_{\ge 0},
\qquad
h_w(\sigma):=\bigl|\{\, i \text{ live in } w : \sigma_i(w)=\sigma \,\}\bigr|.
\]
By construction,
\[
\sum_{\sigma\in \Sigma_{\le q}} h_w(\sigma) = R.
\]
Thus, distinct profiles correspond exactly to distinct nonnegative integer vectors
$(h(\sigma))_{\sigma\in\Sigma_{\le q}}$ whose coordinates sum to $R$.

\paragraph{Counting the number of possible histograms.}
The number of nonnegative integer solutions to
$\sum_{\sigma\in\Sigma_{\le q}} h(\sigma)=R$ with $S'$ variables is the number of
weak compositions of $R$ into $S'$ parts, which by stars-and-bars equals
\[
\#\mathrm{Profiles}
\;\le\;
\binom{R+S'-1}{S'-1}.
\]
Since $S'=O(1)$ is constant, this is polynomial in $R$, i.e.\ $R^{O(1)}$.

\paragraph{Matching the displayed bound.}
For $q\ge 1$ and $S'\ge 1$ we have the crude inequality
\[
\binom{R+S'-1}{S'-1}
\;\le\;
\binom{qR+S'}{S'},
\]
so the stated bound in the lemma follows as written.

\paragraph{Relevance for SPDP upper bounds.}
Finally, \ref{lem:block-perm-invariance} shows that permuting (renaming)
interfaces within blocks does not change the rank contribution: windows that differ
only by such renamings yield SPDP matrices related by invertible left/right
multiplications, hence the same rank. Therefore, in SPDP upper-bound arguments
one may quotient by interface names and retain only the histogram $h_w$, so the
count above is the relevant one.

\paragraph{Independence from $\kappa$.}
The profile alphabet $\Sigma_{\le q}$ and the histogram count depend only on the
local window/canonicalization assumptions (A1)--(A3), (C1)--(C2) and on $R$;
they do not reference derivative order $\kappa$. Hence the profile-count bound is
independent of $\kappa$.
\end{proof}

\subsection{Monomial/coordinate budget}
\label{subsec:monomial-coordinate-budget}

\begin{lemma}[Monomial/coordinate budget]
\label{lem:monomial-coordinate-budget}
Assume (A1) and the degree guard $\ell = O(\log n)$. Fix any admissible
interface-anonymous profile (\ref{def:interface-anonymous-profile}). Then the
number of admissible monomial/coordinate choices contributing SPDP rows
compatible with that fixed profile is at most $n^{O(\log n)}$.
\end{lemma}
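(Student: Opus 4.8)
The plan is to reduce the count to the number of admissible shift monomials and then read off the bound from the degree guard $\ell = O(\log n)$. Fix an admissible interface-anonymous profile $h$ as in \ref{def:interface-anonymous-profile}. By \ref{lem:block-perm-invariance}, any two SPDP rows that differ only by a within-block relabeling of interface identities produce window matrices related by invertible left/right transformations, so for the purpose of counting \emph{rank-relevant} contributions we may fix a single interface realization of $h$. With the profile (hence the derivative pattern $\partial_S p$, up to that fixed realization) pinned down, the only remaining freedom in forming a compatible row $m\cdot\partial_S p$ is the choice of shift monomial $m\in\mathcal{M}_{\le\ell}$ together with its coordinate support. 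So I would argue that it suffices to bound $|\mathcal{M}_{\le\ell}|$.

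Next I would carry out the monomial count directly. In the multilinear convention a monomial $m$ of degree at most $\ell$ is specified by a subset of at most $\ell$ variables, so the number of admissible $m$ is
\[
  |\mathcal{M}_{\le \ell}| \;=\; \sum_{d=0}^{\ell}\binom{n}{d}.
\]
Using the crude bound $\binom{n}{d}\le n^{d}$ together with the degree guard $\ell = O(\log n)$,
\[
  \sum_{d=0}^{\ell}\binom{n}{d}
  \;\le\; (\ell+1)\binom{n}{\ell}
  \;\le\; (\ell+1)\,n^{\ell}
  \;\le\; n^{\ell+1}
  \;=\; n^{O(\log n)},
\]
since the factor $\ell+1 = O(\log n)$ is absorbed into the exponent. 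The same conclusion holds in the general (non-multilinear) convention, because the number of monomials of degree at most $\ell$ in $n$ variables is $\binom{n+\ell}{\ell}\le (n+\ell)^{\ell}\le (2n)^{\ell}=n^{O(\log n)}$ for large $n$. I would also note that (A1) localizes the support of the derivative pattern to a constant-radius neighborhood of the live interfaces, which can only shrink the set of coordinates a compatible shift may touch; since the stated bound already follows from the degree guard alone, (A1) is used only to make the coordinate/support notion well-defined and never enlarges the count.

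The hard part will be making sure that \emph{interface placement} does not secretly inflate the count beyond $n^{O(\log n)}$. A priori, assigning the $R$ live interfaces to coordinates in $[n]$ could cost $\binom{n}{R}=n^{\Theta(R)}$ placements, and for $R=C(\log n)^{c}$ with $c>1$ this exceeds $n^{O(\log n)}$; if such placements were counted here, the lemma would fail. The resolution is precisely \ref{lem:block-perm-invariance}: placements related by within-block relabelings yield rank-equal window matrices and so are \emph{not} counted separately, while the block-level interaction pattern is already encoded in the fixed profile $h$. Thus the genuine rank-relevant freedom compatible with $h$ is exactly the degree-$\le\ell$ shift monomial (and its local support), whose count is governed solely by the degree guard. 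Once this identification is in place, the total number of monomial/coordinate choices compatible with the fixed profile is at most $n^{O(\log n)}$, as claimed.
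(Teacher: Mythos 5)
Your proposal deviates from the paper's proof in a way that introduces a genuine gap. The paper's proof counts \emph{both} sources of freedom directly: the shift monomials, $|\mathcal{M}_{\le\ell}|\le(\ell+1)n^{\ell}$, \emph{and} the derivative-coordinate choices, $\binom{n}{\kappa}B^{\kappa}\le (Bn)^{\kappa}$ with $B=O(1)$ local realizations per variable from (A1), giving $(\ell+1)B^{\kappa}n^{\kappa+\ell}=n^{O(\log n)}$ in the section's regime $\kappa=O(\log n)$. You instead try to eliminate the derivative count entirely by claiming the fixed profile $h$ ``pins down'' the derivative pattern up to within-block relabeling, so that only $|\mathcal{M}_{\le\ell}|$ remains. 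That claim is false: the profile is an \emph{interface-anonymous histogram} of local normal forms, and it does not determine which variables or which blocks carry the derivatives. Two supports $S\neq S'$ compatible with the same $h$ can involve entirely different (even disjoint) blocks, and such choices are not related by permutations of interface identities \emph{within} blocks, which is all that \ref{lem:block-perm-invariance} addresses. So the reduction step that your whole argument rests on does not go through, and the quantity you end up bounding (equivalence classes of rows) is not the quantity the lemma asserts a bound for (admissible monomial/coordinate choices).

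There is a second, independent problem even if one accepts your ``rank-relevant'' reinterpretation: \ref{lem:block-perm-invariance} gives $M(w')=L\,M(w)\,R$ with $L,R$ invertible, which yields $\rank(M(w'))=\rank(M(w))$ but \emph{not} equality of row spaces --- the column action $R$ moves the row space inside the ambient coefficient space. Equal ranks of individual window matrices do not let you ``fix a single interface realization'' when the eventual goal is to bound the rank of the stacked matrix over all realizations; for that one needs all rows of profile $h$ to lie in a \emph{common} low-dimensional subspace, which is exactly the separate (and stronger, compiler/diagonal-basis) hypothesis of \ref{lem:profile-subspace-dim}, not a consequence of permutation invariance. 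Finally, your worry about $\binom{n}{R}$ interface placements is moot under the paper's reading: the rows counted here are indexed by pairs $(S,m)$ with $|S|=\kappa$, so the placement cost is $\binom{n}{\kappa}\le n^{\kappa}$, not $n^{\Theta(R)}$, and the bound $n^{O(\log n)}$ follows from the degree guard together with $\kappa=O(\log n)$ --- which is the route the paper takes and the simple repair for your argument.
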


\begin{proof}
Fix a profile histogram $h$ and consider the family of SPDP rows (generators)
compatible with $h$. Each SPDP row corresponds to a choice of a generator of the
form
\[
g \;=\; m \cdot \partial^{S}p,
\qquad |S|=\kappa,\quad \deg(m)\le \ell,
\]
interpreted under the ambient convention fixed earlier. (The profile constraints
may restrict which such choices are allowed; we only need an \emph{upper bound}, so
we count all possibilities consistent with (A1) and the degree guard.)

\paragraph{Step 1: Counting shift-monomial choices.}
Let $\mathcal{M}_{\le \ell}$ denote the admissible set of shift monomials of degree
at most $\ell$ (e.g.\ multilinear monomials under the Boolean/multilinear
convention). For each $j\in\{0,1,\dots,\ell\}$, the number of multilinear monomials
of degree exactly $j$ is $\binom{n}{j}$, hence
\[
|\mathcal{M}_{\le \ell}|
\;=\;
\sum_{j=0}^{\ell} \binom{n}{j}
\;\le\;
\sum_{j=0}^{\ell} n^{j}
\;\le\;
(\ell+1)\,n^{\ell}.
\]
(In the non-multilinear case one has $|\mathcal{M}_{\le \ell}|=\binom{n+\ell}{\ell}
\le (n+\ell)^\ell$, which yields the same $n^{O(\ell)}$ bound.)

\paragraph{Step 2: Counting derivative-coordinate choices.}
We next bound the number of admissible derivative-coordinate choices for
$\partial^{S}$.

A choice of $S\subseteq[n]$ with $|S|=\kappa$ specifies which variables are
differentiated. The number of such supports is
\[
\binom{n}{\kappa} \;\le\; n^{\kappa}.
\]
Now we account for the fact that, in the SPDP construction for interfaces/windows,
a chosen derivative support does not uniquely determine the corresponding row:
there can be multiple admissible \emph{local derivative coordinates} consistent
with the same global variable choice (e.g.\ because a variable participates in a
constant-size neighborhood and there may be finitely many local coordinate labels).

By (A1) (radius--$1$ / constant-radius locality), each derivative coordinate is
supported inside a constant-size neighborhood. Under the finite-alphabet/local-model
assumptions (A1) (and the fixed scheme defining admissibility), there are only
finitely many distinct local derivative coordinate types available per global
variable, and this number is bounded by a constant
\[
B \;=\; O(1),
\]
depending only on the local neighborhood size and local coordinate alphabet (hence
independent of $n,\kappa,\ell$ and independent of the profile).

Therefore, for a fixed support $S$ with $|S|=\kappa$, the number of admissible ways to
realize the corresponding derivative row coordinates is at most $B^{\kappa}$.
Consequently, the total number of admissible derivative-coordinate choices is at most
\[
\binom{n}{\kappa} B^{\kappa} \;\le\; (Bn)^{\kappa}.
\]

\paragraph{Step 3: Combine the bounds.}
For each SPDP row we choose (a) a shift monomial $m\in\mathcal{M}_{\le \ell}$ and
(b) an admissible derivative coordinate choice as above. Thus the number of
admissible monomial/coordinate choices per fixed profile is at most
\[
|\mathcal{M}_{\le \ell}| \cdot \binom{n}{\kappa}B^{\kappa}
\;\le\;
(\ell+1)\,n^{\ell}\cdot (Bn)^{\kappa}
\;=\;
(\ell+1)\,B^{\kappa}\,n^{\kappa+\ell}.
\]
Under the degree guard $\ell = O(\log n)$ and the logarithmic SPDP regime
($\kappa=O(\log n)$ in the applications of this section), this is
\[
(\ell+1)\,B^{\kappa}\,n^{\kappa+\ell}
\;=\;
n^{O(\log n)}.
\]
Finally, imposing a fixed profile $h$ can only \emph{reduce} the number of admissible
choices, so the same bound holds uniformly for each fixed profile.
\end{proof}

\subsection{Counting $\kappa$-step profiles and profile subspaces}
\label{subsec:kappa-step-profiles}

The following two lemmas refine the profile analysis to obtain a tighter asymptotic
bound. The key observation is that a $\kappa$-step window profile can be decomposed into
a sequence of $\kappa$ single-step transitions, each drawn from a polylogarithmically
bounded set.

\begin{lemma}[Counting $\kappa$-step profiles]
\label{lem:counting-kappa-step-profiles}
Assume (A1)--(A3) and (C1)--(C2), together with the following explicit sub-assumption:
\begin{itemize}
\item[(A1$'$)] \textbf{Bounded interface interaction per step:} Each single computation
step affects at most $b=O(1)$ live interfaces, where $b$ is a constant independent of
$n$ and $R$. (This follows from radius-1 locality (A1) when each interface has $O(1)$
neighbors in the interaction graph.)
\end{itemize}
Let $\kappa$ be the derivative order and let $H$ be the
set of interface-anonymous $\kappa$-step profiles realizable by canonical windows. Then
\[
|H| \;\le\; (\log n)^{O(\kappa)}.
\]
\end{lemma}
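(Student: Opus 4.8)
The plan is to bound $|H|$ by the cruder \emph{ordered} count: rather than enumerating interface-anonymous profiles directly, I would count the ordered sequences of single steps that can produce them, and then use that the profile is a deterministic function of the window. Since the map sending a canonical window to its interface-anonymous profile histogram is a function, $|H|$ is at most the number of realizable $\kappa$-step windows, so it suffices to bound the latter.

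First I would bound the number of distinct \emph{single-step transitions}. By (A1$'$) (equivalently (A1)), each single step touches at most $b=O(1)$ of the at most $R$ currently live interfaces. The choice of which interfaces are touched is a choice of a subset of size at most $b$ from the live set, of which there are at most $\sum_{j=0}^{b}\binom{R}{j}\le (b+1)R^{b}=R^{O(1)}$. By (A2) the local type alphabet $\Sigma$ is finite, and by (C2) each touched interface's local update reduces to a normal form drawn from a fixed finite set, so once the touched interfaces are fixed there are only $O(1)$ ways to specify the local update applied to them. Hence the number of single-step transitions is at most $R^{O(1)}\cdot O(1)=R^{O(1)}$, and invoking (A3), $R=C(\log n)^{c}$, this is $(\log n)^{O(1)}$.

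Next I would assemble the window count. A canonical length-$\kappa$ window is, by (C1)--(C2), the normal form of some ordered sequence of $\kappa$ single steps; since canonicalization is a quotient (it identifies sequences, never creates new ones), the number of canonical windows is at most the number of ordered step-sequences. Each of the $\kappa$ coordinates is a single-step transition from a set of size $(\log n)^{O(1)}$, so the number of such sequences is at most $\bigl((\log n)^{O(1)}\bigr)^{\kappa}=(\log n)^{O(\kappa)}$. Combining with the previous paragraph, $|H|\le(\log n)^{O(\kappa)}$, as claimed.

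I expect the main delicacy to be the per-step transition count rather than the product step: one must argue carefully that a single step is fully described (at the granularity needed to determine the resulting profile) by the identity of the at most $b$ touched interfaces together with a finite local-update label, so that the per-step count is genuinely $R^{O(1)}$ and not larger. The degree guard and ambient monomial conventions do not enter here, since we are counting \emph{window profiles} rather than monomial/coordinate choices (those are handled separately in \ref{lem:monomial-coordinate-budget}). I would also flag explicitly that this $(\log n)^{O(\kappa)}$ estimate is the \emph{crude ordered-step bound}: it is strictly weaker than the $\kappa$-independent $R^{O(1)}$ count of \ref{lem:profile-compression-removes-kappa}, and is recorded here only as the naive baseline that the profile-compression argument improves upon.
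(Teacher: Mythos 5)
Your proposal is correct and follows essentially the same route as the paper's proof: bound the number of single-step transitions by $R^{O(1)}=(\log n)^{O(1)}$ using (A1$'$), (A2), and (A3), then take the $\kappa$-fold product of ordered step-sequences and observe that the profile map factors through this count. Your explicit remarks that canonicalization is a quotient (so it cannot increase the count) and that this is the crude ordered-step baseline improved upon by \ref{lem:profile-compression-removes-kappa} match the paper's Step~2 and its closing remark, so there is nothing to correct.
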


\begin{proof}
A canonical $\kappa$-step window can be viewed as a sequence of $\kappa$ single-step transitions.
At each step $t\in\{1,\dots,\kappa\}$, the computation acts on the current live interface
configuration and produces a new configuration.

\paragraph{Step 1: Single-step profile transitions.}
By (A1) (radius-1 locality) and (A1$'$), each single step affects at most $b=O(1)$
live interfaces. By (A2) (finite alphabet), the local state at each interface is drawn
from a finite set $\Sigma$ with $|\Sigma|=O(1)$. By (A3) (width bound), at most
$R=C(\log n)^c$ interfaces are live at any time.

For a single step, the interface-anonymous profile transition is determined by:
\begin{itemize}
\item which subset of live interfaces are affected (at most $b=O(1)$ per step by (A1$'$));
\item the local type changes at those interfaces (finitely many possibilities by (A2)).
\end{itemize}
Since the number of ways to choose $\le b$ interfaces from $R$ live interfaces and
assign local transitions is at most $\binom{R}{b}\cdot |\Sigma|^{O(1)} \le R^{O(1)}$,
the number of distinct single-step profile transitions is at most
\[
T_{\mathrm{step}} \;\le\; R^{O(1)} \;=\; (\log n)^{O(1)}.
\]

\paragraph{Step 2: Counting $\kappa$-step profile sequences.}
A $\kappa$-step profile is determined by a sequence of $\kappa$ single-step profile transitions.
The total number of such sequences is at most
\[
(T_{\mathrm{step}})^{\kappa} \;\le\; \bigl((\log n)^{O(1)}\bigr)^{\kappa} \;=\; (\log n)^{O(\kappa)}.
\]

\paragraph{Step 3: Asymptotic conversion when $\kappa=\Theta(\log n)$.}
When $\kappa=\Theta(\log n)$, we have
\begin{gather*}
(\log n)^{O(\kappa)} \;=\; (\log n)^{O(\log n)} \;=\; e^{O((\log n)(\log\log n))} \\
\;=\; n^{O(\log\log n)} \;=\; 2^{O((\log n)(\log\log n))}.
\end{gather*}
This bound is \textbf{super-polynomial} (hence not $n^{O(1)}$), but still
\textbf{quasi-polynomial}/\textbf{subexponential} in $n$.
\end{proof}

\begin{remark}
\ref{lem:counting-kappa-step-profiles} provides a cruder bound than
\ref{lem:profile-compression-removes-kappa}: it counts \emph{ordered} step-sequences rather
than histogram profiles, yielding $(\log n)^{O(\kappa)} = n^{O(\log\log n)}$ instead
of $R^{O(1)} = (\log n)^{O(1)} = n^{o(1)}$. For the main Width$\Rightarrow$Rank theorem, the sharper
profile-compression bound is used, which is \emph{essential} to obtain the
polynomial upper bound.
\end{remark}

\begin{lemma}[Profile subspace dimension]
\label{lem:profile-subspace-dim}
Assume (A1)--(A3) and (C1)--(C2), and the deterministic compiler\slash diagonal-basis
structure of the local-width model. Fix an interface-anonymous profile $h\in H$.
Let $V_h$ be the subspace of the ambient coefficient space spanned by all SPDP rows
generated by windows with profile $h$. Then
\[
\dim V_h \;\le\; (\log n)^{O(1)}.
\]
\end{lemma}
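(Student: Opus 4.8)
The plan is to use the block-factorable (diagonal) basis furnished by the hypothesis to split each profile-$h$ generator along the block structure, and then to bound the number of genuinely independent local contributions using the width bound (A3) together with the constant dimension of each local factor. Concretely, write the ambient coefficient space in the diagonal basis $B=\bigotimes_{t} B_t$, where (by part (i) of the diagonal-basis structure) each block $t$ contributes a local factor space of constant dimension $d=O(1)$. In this basis a generator $m\cdot\partial_S p$ attached to a window $w$ of profile $h$ is supported only on the live blocks of $w$, of which there are at most $R$ by (A3), and on each such block its local component lies in the corresponding constant-dimensional factor.

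First I would check that, on any single block, the admissible local contribution is confined to a constant-dimensional subspace. By radius-$1$ locality (A1) each block carries only $O(1)$ variables, so the restriction of the shift monomial to that block ranges over $O(1)$ multilinear monomials and the restricted derivative has $O(1)$ admissible forms; together with the finite alphabet (A2) and the normal-form bound (C2), the local component on a block of type $\sigma\in\Sigma_{\le q}$ lies in a fixed subspace $U_\sigma$ with $\dim U_\sigma=O(1)$. By \ref{lem:profile-compression-removes-kappa} there are only $S'=|\Sigma_{\le q}|=O(1)$ such types. I would then invoke within-block permutation invariance (\ref{lem:block-perm-invariance}): any two profile-$h$ windows differing only by a relabeling of interface identities inside blocks produce SPDP rows related by invertible block-structured row/column operations, hence spanning the same subspace. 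This lets me quotient out interface names and describe $V_h$ purely through the histogram $h$, i.e.\ through the multiset of per-block types rather than ordered interface histories.

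The hard part is preventing a tensor-product blow-up. Naively, assembling $R$ live blocks each carrying a constant-dimensional local factor would give a space of dimension $d^{O(R)}=2^{(\log n)^{O(1)}}$, which is only quasi-polynomial and too weak. The resolution is exactly the frontier/transfer phenomenon built into the diagonal-basis hypothesis and flagged in \ref{subsec:model-defs}: because the model is width-$R$ and local, the live interfaces are coupled through a frontier of size at most $R$, so the profile-$h$ generators lie in the image of a transfer operator whose effective rank grows \emph{additively} in the $\le R$ live blocks and the $O(1)$ local degrees of freedom per block, not as their product. Each live block therefore contributes at most $\dim U_\sigma$ (times the $O(1)$ local shift choices) independent frontier directions, so aggregating over the $\le R$ live blocks and $S'=O(1)$ types gives
\[
\dim V_h \;\le\; R\cdot S'\cdot O(1) \;=\; \poly(R) \;=\; (\log n)^{O(1)},
\]
using (A3).

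I expect this additive-versus-multiplicative step to be the only substantive point: it is where the deterministic-compiler/diagonal-basis assumption does real work, converting a product of local factor spaces into a bounded-dimensional frontier. The remaining ingredients---constant block size from (A1), finiteness of local types from (A2) and (C2), and rank-preservation under within-block relabeling from \ref{lem:block-perm-invariance}---are routine once the frontier bound is in place.
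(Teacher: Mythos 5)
You have the right setup (diagonal basis, constant-dimensional block factors, at most $R$ live blocks, quotienting by within-block relabelings via \ref{lem:block-perm-invariance}), and you correctly identify the crux: a naive tensor product over the $\le R$ live blocks gives $d^{O(R)}$, which is only quasi-polynomial. But your resolution of that crux is a genuine gap. You assert that the profile-$h$ rows ``lie in the image of a transfer operator whose effective rank grows additively'' in the live blocks, attributing this to a frontier/transfer phenomenon ``built into'' the diagonal-basis hypothesis. No such transfer operator is defined in the model of \ref{subsec:model-defs}, and you do not construct one; the additivity claim is precisely the statement that needs proof, so invoking it is circular. Worse, the linear-in-$R$ bound $\dim V_h \le R\cdot S'\cdot O(1)$ you extract from it is stronger than what the block structure actually supports: even after full symmetrization, $n_\sigma$ blocks each carrying a $d_\sigma$-dimensional factor can span a space of dimension of order $n_\sigma^{\,d_\sigma-1}$, which is polynomial but not linear in $n_\sigma$ once $d_\sigma\ge 3$. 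So the additive picture is not merely unproven; it is in general false as stated, and your proof collapses at exactly the step you flagged as the hard part.

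The paper closes this gap with a concrete combinatorial mechanism you are missing: symmetric-power (stars-and-bars) counting organized by the histogram. Group the live blocks by their local type $\sigma\in\Sigma_{\le q}$, with $n_\sigma:=h(\sigma)$ blocks of type $\sigma$. By \ref{lem:block-perm-invariance}, the span of profile-$h$ rows is invariant under permuting like-typed blocks, so the contribution of the type-$\sigma$ blocks lies in a symmetric (histogram-invariant) subspace of the corresponding tensor factor, of dimension at most $\binom{n_\sigma+d_\sigma-1}{d_\sigma-1}$ with $d_\sigma=O(1)$. Multiplying over the $|\Sigma_{\le q}|=O(1)$ types gives
\[
\dim V_h \;\le\; \prod_{\sigma}\binom{n_\sigma+O(1)}{O(1)} \;\le\; R^{O(1)} \;=\; (\log n)^{O(1)},
\]
using $\sum_\sigma n_\sigma\le R$ from (A3). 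Note the bound is polynomial in $R$ of constant degree, not linear in $R$: the conversion from multiplicative to polynomial comes from symmetry plus constant local dimension, not from any frontier additivity. If you replace your transfer-operator paragraph with this symmetric-power count, the rest of your argument (block factorization from (A1)--(A2), (C2), and the permutation-invariance quotient) goes through as written.
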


\begin{proof}
Fix a profile $h$. By \ref{lem:block-perm-invariance}, all windows realizing
profile $h$ differ only by within-block permutations of interface identities, and
such permutations act by invertible row/column transformations on the SPDP submatrix.
Hence the row space $V_h$ is determined (up to isomorphism) by a single representative
window realizing $h$.

\paragraph{Step 1: Block-local basis structure.}
Under the deterministic compiler\slash diagonal-basis assumption, the SPDP row vectors for
a fixed profile $h$ are assembled from block-local coefficient vectors. Each block
$B_j$ contributes a local factor space of dimension at most $|B_j|^{O(1)}=O(1)$
(since blocks have constant size by (A1)).

\paragraph{Step 2: Counting active blocks.}
At any time, at most $R=C(\log n)^c$ interfaces are live (by (A3)), and hence at most
$R$ blocks can contribute non-trivially to the SPDP row structure. The remaining blocks
contribute identity/trivial factors.

\paragraph{Step 3: Tensor structure of the profile subspace.}
The subspace $V_h$ is contained in a tensor product of at most $R$ non-trivial
block-local factor spaces, each of constant dimension. However, the profile constraint
$h$ imposes that the block-local factors must be consistent with the histogram $h$,
which groups blocks by their local profile type.

For each local profile type $\sigma\in\Sigma_{\le q}$, let $n_\sigma:=h(\sigma)$ be
the number of blocks of type $\sigma$. The contribution from all blocks of type
$\sigma$ spans a symmetric/histogram-invariant subspace of dimension at most
$\binom{n_\sigma + d_\sigma - 1}{d_\sigma - 1}$ where $d_\sigma=O(1)$ is the local
factor dimension. Since $\sum_\sigma n_\sigma \le R$ and there are $|\Sigma_{\le q}|=O(1)$
types, the total dimension is bounded by a product of at most $O(1)$ terms, each
polynomial in $R$:
\[
\dim V_h \;\le\; \prod_{\sigma} \binom{n_\sigma + O(1)}{O(1)} \;\le\; R^{O(1)} \;=\; (\log n)^{O(1)}.
\]
\end{proof}

\subsection{Main theorem: Polynomial Width$\Rightarrow$Rank bound}
\label{subsec:poly-width-implies-rank}

\begin{theorem}[Width$\Rightarrow$Rank: polynomial bound]
\label{thm:poly-width-rank}
Assume (A1)--(A4) and (C1)--(C2), together with the deterministic compiler\slash diagonal-basis
structure. Let $p$ be any polynomial computed in a local-width model with width
$R=C(\log n)^c$. Then for $\kappa,\ell=\Theta(\log n)$,
\[
\Gamma_{\kappa,\ell}(p)
\;\le\;
n^{O(1)}.
\]
\end{theorem}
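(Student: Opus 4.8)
The plan is to bound $\Gamma_{\kappa,\ell}(p) = \dim_\F V_{\kappa,\ell}(p)$ (Definition \ref{def:spdp-rank}) by partitioning the generating rows according to the canonical-window profile they carry and summing dimensions profile-by-profile, rather than attempting to count rows directly. Every SPDP generator $m \cdot \partial_S p$ with $|S| = \kappa$ and $\deg(m) \le \ell$ is produced by a length-$\kappa$ derivative window, which after canonicalization (C1)--(C2) carries a well-defined interface-anonymous profile $h \in H$. Grouping the generators by the profile of their underlying window yields the decomposition
\[
V_{\kappa,\ell}(p) \;=\; \sum_{h \in H} V_h,
\]
where $V_h$ is the span of all SPDP rows arising from windows of profile $h$, exactly as in \ref{lem:profile-subspace-dim}. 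Subadditivity of dimension for a finite sum of subspaces then gives
\[
\Gamma_{\kappa,\ell}(p) \;=\; \dim_\F V_{\kappa,\ell}(p) \;\le\; \sum_{h \in H} \dim_\F V_h \;\le\; |H| \cdot \max_{h \in H} \dim_\F V_h.
\]

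I would then substitute the two sharp structural inputs established above. Corollary \ref{cor:poly-many-profiles} (via \ref{lem:profile-compression-removes-kappa}) gives $|H| \le R^{O(1)}$, independent of $\kappa$; with $R = C(\log n)^c$ this is $(\log n)^{O(1)}$. Lemma \ref{lem:profile-subspace-dim} gives $\dim_\F V_h \le (\log n)^{O(1)}$ for each fixed profile. Multiplying,
\[
\Gamma_{\kappa,\ell}(p) \;\le\; (\log n)^{O(1)} \cdot (\log n)^{O(1)} \;=\; (\log n)^{O(1)} \;=\; n^{o(1)} \;\le\; n^{O(1)},
\]
which is in fact stronger than the stated polynomial bound. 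The logical content is thus entirely carried by the two lemmas, combined through subadditivity.

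The main obstacle---and the place where the argument must be careful---is justifying the decomposition $V_{\kappa,\ell}(p) = \sum_{h \in H} V_h$ and resisting the temptation to bound rank by counting generating rows. Row-counting is precisely the route that fails: the monomial/coordinate budget of \ref{lem:monomial-coordinate-budget} shows a single profile can carry up to $n^{O(\log n)}$ admissible rows, so counting rows (even multiplied by the profile count) yields only the quasi-polynomial $n^{O(\log n)}$ bound, matching the cruder ordered-step estimate of \ref{lem:counting-kappa-step-profiles}. The polynomial bound is recovered only by observing that, under the diagonal-basis/block-factorization structure, these many rows collapse into a polylogarithmic-dimensional subspace $V_h$ (\ref{lem:profile-subspace-dim}), so that the relevant quantity is $\dim V_h$ rather than the number of rows. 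Establishing the decomposition itself requires checking that canonicalization (C1)--(C2) only identifies generators representing the same polynomial, and that \ref{lem:block-perm-invariance} makes the assignment of a row to its profile class well-defined up to rank-preserving transformations; once that is in place, every generator lands in some $V_h$, the finite sum above is exhaustive, and subadditivity closes the argument.
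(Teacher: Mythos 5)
Your proposal is correct and follows essentially the same route as the paper's own proof: both group the SPDP generators by interface-anonymous profile, apply subadditivity (you phrase it as dimension subadditivity over the subspace sum $\sum_{h\in H} V_h$, the paper as rank subadditivity over a vertical block decomposition of $M_{\kappa,\ell}(p)$, which is the same fact), and then invoke \ref{cor:poly-many-profiles} for $|H|\le R^{O(1)}$ and \ref{lem:profile-subspace-dim} for $\dim V_h\le(\log n)^{O(1)}$. Your closing discussion of why row-counting fails and why \ref{lem:block-perm-invariance} is needed for well-definedness of the profile classes matches the paper's own remarks following the theorem.
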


\begin{proof}
Fix $\kappa,\ell=\Theta(\log n)$ and the width parameter $R=C(\log n)^c$.
Let $W$ be the set of canonical windows (under (C1)--(C2)) that contribute
rows to the SPDP matrix $M_{\kappa,\ell}(p)$ in this regime. Each window
$w\in W$ induces a profile histogram $h(w)$ (\ref{def:interface-anonymous-profile}).
Let
\[
H \;:=\; \{\, h(w) : w\in W \,\}
\]
be the set of all realizable interface-anonymous profiles.

\paragraph{Step 1: Bound the number of profiles.}
By \ref{lem:profile-compression-removes-kappa}
(equivalently \ref{cor:poly-many-profiles}),
\[
|H| \;\le\; R^{O(1)}.
\]
Since $R=C(\log n)^c$, it follows that $|H|\le (\log n)^{O(1)}\le n^{O(1)}$.

\paragraph{Step 2: Decompose the SPDP matrix by profile classes.}
Reorder the rows of the global SPDP matrix $M_{\kappa,\ell}(p)$ so that rows coming
from windows with the same profile are grouped together. This yields a vertical
concatenation
\[
M_{\kappa,\ell}(p)
\;=\;
\begin{bmatrix}
M^{(h_1)}\\ \hline
M^{(h_2)}\\ \hline
\vdots\\ \hline
M^{(h_{|H|})}
\end{bmatrix},
\]
where each block $M^{(h)}$ consists exactly of the SPDP rows generated by windows
whose profile is $h$.

For any matrices $A,B$ with the same number of columns,
\[
\rank\!\begin{bmatrix}A\\B\end{bmatrix}
\;\le\;
\rank(A)+\rank(B),
\]
because the row space of $\begin{bmatrix}A\\B\end{bmatrix}$ is contained in
$\mathrm{RowSpan}(A)+\mathrm{RowSpan}(B)$ and therefore has dimension at most the
sum of dimensions. Iterating this inequality over the block decomposition gives
\begin{equation}
\label{eq:rank-sum-over-profiles}
\Gamma_{\kappa,\ell}(p)
\;=\;
\rank(M_{\kappa,\ell}(p))
\;\le\;
\sum_{h\in H} \rank(M^{(h)}).
\end{equation}

\paragraph{Step 3: Bound the rank contribution of a fixed profile via subspace dimension.}
Fix $h\in H$. Let $V_h$ be the subspace of the ambient coefficient space spanned by
all SPDP rows generated by windows with profile $h$. Then $\rank(M^{(h)})=\dim V_h$.

By \ref{lem:profile-subspace-dim}, under the deterministic compiler\slash diagonal-basis
assumption, each profile subspace has dimension
\[
\dim V_h \;\le\; (\log n)^{O(1)}.
\]

\paragraph{Step 4: Combine the bounds.}
Combining \eqref{eq:rank-sum-over-profiles} with the profile count and subspace
dimension bounds yields
\[
\Gamma_{\kappa,\ell}(p)
\;\le\;
\sum_{h\in H} \dim V_h
\;\le\;
|H|\cdot (\log n)^{O(1)}
\;\le\;
R^{O(1)}\cdot (\log n)^{O(1)}.
\]
Since $R = C(\log n)^c$, this gives
\[
\Gamma_{\kappa,\ell}(p)
\;\le\;
(\log n)^{O(1)} \cdot (\log n)^{O(1)}
\;=\;
(\log n)^{O(1)}
\;=\;
n^{O(1)},
\]
as claimed.
\end{proof}

\begin{remark}
The key ingredients are:
\begin{enumerate}
\item[(i)] \ref{lem:profile-compression-removes-kappa} and \ref{cor:poly-many-profiles},
which show that the number of interface-anonymous profiles is $R^{O(1)}$, independent of $\kappa$,
by exploiting the normal-form compression of canonical windows;
\item[(ii)] \ref{lem:profile-subspace-dim}, which shows each profile subspace has
polylogarithmic dimension $(\log n)^{O(1)}$ under the compiler\slash diagonal-basis structure;
\item[(iii)] \ref{lem:block-perm-invariance}, which ensures within-profile permutations
do not inflate rank.
\end{enumerate}
The combination yields the $n^{O(1)}$ bound, which is strictly stronger than the
naive $n^{O(\log n)}$ bound one would obtain from crude monomial counting alone.
\end{remark}

\begin{corollary}[Fixed $\ell$: polynomial rank]
\label{cor:poly-rank-fixed-ell}
Assume (A1)--(A3), (C1)--(C2), and the deterministic compiler\slash diagonal-basis structure.
Let $\kappa=\Theta(\log n)$, $R=C(\log n)^c$, and $\ell=O(1)$ be fixed (e.g.\ $\ell\in\{2,3\}$).
Then
\[
\Gamma_{\kappa,\ell}(p) \;\le\; n^{O(1)}.
\]
\end{corollary}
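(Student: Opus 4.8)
The plan is to derive this as a specialization of the main theorem rather than re-running the entire profile-counting machinery. The cleanest route uses monotonicity in the shift parameter together with \ref{thm:poly-width-rank}. Since $\ell = O(1)$, there is a constant $c_0$ with $\ell \le c_0$ for all $n$, so I would pick an auxiliary shift degree $\ell' := \Theta(\log n)$ chosen large enough that $\ell' \ge \ell$ for every sufficiently large $n$. By Proposition \ref{prop:monotonicity-params}(i), monotonicity in $\ell$ then gives
\[
  \Gamma_{\kappa,\ell}(p) \;\le\; \Gamma_{\kappa,\ell'}(p).
\]
The pair $(\kappa,\ell')$ satisfies $\kappa,\ell' = \Theta(\log n)$, so assumption (A4) holds at $(\kappa,\ell')$, and all of (A1)--(A3), (C1)--(C2), and the deterministic compiler/diagonal-basis structure are inherited unchanged from the corollary's hypotheses. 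Hence \ref{thm:poly-width-rank} applies at parameters $(\kappa,\ell')$ and yields $\Gamma_{\kappa,\ell'}(p) \le n^{O(1)}$. Chaining the two displayed facts produces $\Gamma_{\kappa,\ell}(p) \le n^{O(1)}$, as required.

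The one subtlety I would flag explicitly is that the ambient space $W_{\kappa,\ell}(p)$ depends on $\ell$ through $D = \max\{0,\deg(p)-\kappa+\ell\}$, so the comparison $\Gamma_{\kappa,\ell}(p) \le \Gamma_{\kappa,\ell'}(p)$ is across two different ambient spaces. This is not a genuine obstacle, because Proposition \ref{prop:monotonicity-params}(i) is proved via the span inclusion $G_{\kappa,\ell}(p) \subseteq G_{\kappa,\ell'}(p)$, and rank equals the \emph{intrinsic} dimension of the spanned subspace $V_{\kappa,\ell}(p)$, independent of the embedding as long as both spans sit inside a common space. Indeed $\ell' \ge \ell$ forces $D' \ge D$, hence $W_{\kappa,\ell}(p) \subseteq W_{\kappa,\ell'}(p)$, so both generating families live in the larger ambient space and the dimension comparison is legitimate. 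I would state this inclusion as a one-line justification so the reader sees that the $\ell$-dependence of the ambient convention causes no difficulty.

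As an alternative (and as a robustness check), I would note that one may instead re-run the proof of \ref{thm:poly-width-rank} verbatim with $\ell = O(1)$: the profile count $|H| \le R^{O(1)}$ from \ref{lem:profile-compression-removes-kappa} and \ref{cor:poly-many-profiles} is already independent of both $\kappa$ and $\ell$, and the per-profile bound $\dim V_h \le (\log n)^{O(1)}$ of \ref{lem:profile-subspace-dim} invokes only (A1)--(A3), (C1)--(C2), and the compiler/diagonal-basis structure, never the specific value of $\ell$. Summing over profiles via the subadditivity of rank under vertical concatenation then gives the same $\Gamma_{\kappa,\ell}(p) \le R^{O(1)} \cdot (\log n)^{O(1)} = (\log n)^{O(1)} = n^{O(1)}$ bound. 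I expect no real difficulty anywhere: the monotonicity argument is essentially immediate, and the only point requiring care is the cross-ambient comparison, which the span formulation of the rank already resolves.
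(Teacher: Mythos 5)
Your proposal is correct, and your primary route is genuinely different from the paper's. The paper proves this corollary by re-running the machinery of \ref{thm:poly-width-rank} directly at the given $(\kappa,\ell)$: it invokes \ref{cor:poly-many-profiles} for the profile count $|H|\le R^{O(1)}$ and \ref{lem:profile-subspace-dim} for the per-profile bound $\dim V_h\le(\log n)^{O(1)}$, multiplies them, and adds the remark that fixing $\ell=O(1)$ does not improve the profile count since the monomial-choice contribution is already subsumed by the profile subspace dimension --- this is precisely your ``alternative/robustness check'' route. Your primary argument instead treats \ref{thm:poly-width-rank} as a black box: lift $\ell=O(1)$ to an auxiliary $\ell'=\Theta(\log n)$ via the monotonicity Proposition \ref{prop:monotonicity-params}(i), apply the theorem at $(\kappa,\ell')$ where (A4) holds, and chain $\Gamma_{\kappa,\ell}(p)\le\Gamma_{\kappa,\ell'}(p)\le n^{O(1)}$. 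This is valid, and you correctly identify and resolve the one real subtlety: the ambient space $W_{\kappa,\ell}(p)$ depends on $\ell$ through the degree cutoff $D$, so the comparison crosses two ambient conventions; the inclusion $W_{\kappa,\ell}(p)\subseteq W_{\kappa,\ell'}(p)$ together with the intrinsic-span characterization $\Gamma_{\kappa,\ell}(p)=\dim V_{\kappa,\ell}(p)$ (which is exactly how the paper proves Proposition \ref{prop:monotonicity-params}(i)) makes it legitimate. What each approach buys: your monotonicity route is more modular and shorter, needing no re-inspection of the profile lemmas' hypotheses; the paper's direct route is more informative, since it exhibits \emph{which} ingredient dominates (the $\ell$-independent profile count) and thereby shows the bound is not an artifact of inflating $\ell$ --- a point worth preserving in whichever write-up you choose.
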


\begin{proof}
By \ref{cor:poly-many-profiles}, the number of interface-anonymous profiles
satisfies $|H|\le R^{O(1)}$, independent of $\kappa$. By \ref{lem:profile-subspace-dim},
each profile subspace has $\dim V_h \le (\log n)^{O(1)}$. Hence
\[
\Gamma_{\kappa,\ell}(p)
\;\le\;
|H|\cdot \dim V_h
\;\le\;
R^{O(1)}\cdot (\log n)^{O(1)}.
\]
Since $R = C(\log n)^c$, this gives
\[
\Gamma_{\kappa,\ell}(p) \;\le\; (\log n)^{O(1)} \;=\; n^{O(1)}.
\]
Note that fixing $\ell=O(1)$ does not improve the profile count $|H|$, which is the
dominant factor; the improvement from fixed $\ell$ only affects the monomial-choice
contribution, which is already subsumed by the profile subspace dimension bound.
\end{proof}

\begin{remark}
\ref{cor:poly-rank-fixed-ell} is the regime used in separation arguments:
fixing $\ell \in \{2,3\}$ while taking $\kappa = \Theta(\log n)$ yields polynomial
SPDP rank for width-bounded computations. This is strictly
stronger than the naive $n^{O(\log n)}$ bound one would obtain without the profile
compression (\ref{lem:profile-compression-removes-kappa}), and can be compared
against exponential lower bounds for explicit hard polynomials such as the permanent.
\end{remark}

\section{A standard upper-bound template}
\label{sec:upper-bound-template}

This section records a generic form of SPDP rank upper bounds used in practice.
To derive a lower bound against a circuit class $\mathcal{C}$, one typically proves:
(i) an upper bound on $\Gamma_{\kappa,\ell}(p)$ for all $p$ computed in $\mathcal{C}$, and
(ii) a lower bound on $\Gamma_{\kappa,\ell}(p^\star)$ for an explicit hard family $p^\star$,
both at the same $(\kappa,\ell)$.

\begin{remark}[Template: structured computations yield bounded SPDP rank]
\label{rem:template-structured-bounded-rank}
Let $\mathcal{C}$ be a class of polynomial computations (e.g.\ formulas,
bounded-depth circuits, or a syntactically restricted arithmetic model).
If every $p$ computed in $\mathcal{C}$ satisfies an upper bound of the form
\[
\Gamma_{\kappa,\ell}(p) \;\le\; U_{\mathcal{C}}(n,d,s;\kappa,\ell),
\]
where $n$ is the number of variables, $d$ is the syntactic degree, and $s$ is the
circuit/formula size parameter, then SPDP rank is uniformly bounded over $\mathcal{C}$
by the same function $U_{\mathcal{C}}$.

In concrete settings, $U_{\mathcal{C}}$ is often polynomial when $\kappa=\Theta(\log n)$,
and much smaller for more restricted models. One can either include a concrete
instantiation/proof for a chosen model, or keep this as a formal template and move
model-specific bounds to an appendix.
\end{remark}

\begin{corollary}[Codimension form of an upper bound]
\label{cor:codim-template}
Under the same hypotheses and fixed ambient basis,
\[
  \codim_{\kappa,\ell}(p) \;\ge\; N_{\kappa,\ell}(p) - U_{\mathcal{C}}(n,d,s;\kappa,\ell).
\]
\end{corollary}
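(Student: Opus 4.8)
The plan is to derive the claim directly from the definition of SPDP codimension (\ref{def:spdp-codim}) together with the rank upper bound supplied by the template hypothesis (\ref{rem:template-structured-bounded-rank}). The only content is that subtracting a quantity from a fixed ambient dimension reverses inequalities, so no genuine combinatorial or linear-algebraic work is required beyond what has already been established earlier in the paper.

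First I would record that, under the fixed ambient basis convention of \ref{subsec:ambient-basis}, the ambient dimension $N_{\kappa,\ell}(p)$ is a fixed nonnegative integer depending only on $(\kappa,\ell)$ and the chosen embedding, and that by definition $\codim_{\kappa,\ell}(p) = N_{\kappa,\ell}(p) - \Gamma_{\kappa,\ell}(p)$. Next I would invoke the template hypothesis, which asserts that $\Gamma_{\kappa,\ell}(p) \le U_{\mathcal{C}}(n,d,s;\kappa,\ell)$ for every $p$ computed in $\mathcal{C}$ at the stated $(\kappa,\ell)$. Negating this inequality gives $-\Gamma_{\kappa,\ell}(p) \ge -U_{\mathcal{C}}(n,d,s;\kappa,\ell)$, and adding the fixed constant $N_{\kappa,\ell}(p)$ to both sides yields the desired bound $\codim_{\kappa,\ell}(p) \ge N_{\kappa,\ell}(p) - U_{\mathcal{C}}(n,d,s;\kappa,\ell)$.

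I do not expect any real obstacle: the statement is an immediate restatement of the rank upper bound in the dual (deficit) language, made possible precisely because the SPDP framework fixes a canonical ambient space so that $N_{\kappa,\ell}(p)$ is well-defined and identical on both sides of the inequality. The only point warranting a sentence of care is the direction of the inequality, which flips relative to the rank bound because codimension is measured as $N_{\kappa,\ell}(p)$ minus the rank; and the implicit requirement that the same $(\kappa,\ell)$ regime and ambient/basis convention are used on both sides, which is guaranteed by inheriting the shared hypotheses of \ref{rem:template-structured-bounded-rank}.
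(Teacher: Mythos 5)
Your proposal is correct and follows exactly the same route as the paper's own proof: unfold the definition $\codim_{\kappa,\ell}(p) = N_{\kappa,\ell}(p) - \Gamma_{\kappa,\ell}(p)$ and substitute the rank upper bound $\Gamma_{\kappa,\ell}(p) \le U_{\mathcal{C}}(n,d,s;\kappa,\ell)$, flipping the inequality. Your extra remarks about parameter consistency and the fixed ambient convention are appropriate but do not change the argument, which is the same one-line deduction.
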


\begin{proof}
By definition, $\codim_{\kappa,\ell}(p) = N_{\kappa,\ell}(p) - \Gamma_{\kappa,\ell}(p)$, where $N_{\kappa,\ell}(p)$
is the ambient dimension. If $\Gamma_{\kappa,\ell}(p) \le U_{\mathcal{C}}(n,d,s;\kappa,\ell)$, then
\[
\codim_{\kappa,\ell}(p) = N_{\kappa,\ell}(p) - \Gamma_{\kappa,\ell}(p) \ge N_{\kappa,\ell}(p) - U_{\mathcal{C}}(n,d,s;\kappa,\ell).
\]
\end{proof}

\subsection{Circuit upper bound}
\label{subsec:circuit-upper-bound}

As a concrete instantiation of \ref{rem:template-structured-bounded-rank},
we state an SPDP rank upper bound for general arithmetic circuits.

\begin{lemma}[Circuit upper bound]
\label{lem:circuit-ub}
If $\tilde f$ is computed by an arithmetic circuit of size $s$, then
\[
  \rankSPDP_{\kappa,\ell}(\tilde f) \;\le\; s^{O(\kappa)} \cdot \binom{n + \ell}{\ell}.
\]
\end{lemma}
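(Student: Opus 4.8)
The plan is to prove the bound as a product of two independent contributions: a \emph{shift factor} counting admissible monomials, and a \emph{derivative factor} bounding the dimension of the pure $\kappa$-th order partial-derivative span. Recall from Definition~\ref{def:spdp-rank} that $\rankSPDP_{\kappa,\ell}(\tilde f)=\dim_\F V_{\kappa,\ell}(\tilde f)$, where $V_{\kappa,\ell}(\tilde f)=\mathrm{span}_\F\{\,m\cdot\partial_S\tilde f : |S|=\kappa,\ m\in\mathcal{M}_{\le\ell}\,\}$. First I would isolate the derivative space $\mathcal{D}:=\mathrm{span}_\F\{\partial_S\tilde f : |S|=\kappa\}$ and reduce the shifted span to it: fixing a basis $g_1,\dots,g_D$ of $\mathcal{D}$ with $D=\dim_\F\mathcal{D}$, every generator satisfies $\partial_S\tilde f=\sum_j c_{S,j}\,g_j$, hence $m\cdot\partial_S\tilde f=\sum_j c_{S,j}\,(m\cdot g_j)\in\mathrm{span}\{\,m\cdot g_j:\deg m\le\ell,\ 1\le j\le D\,\}$. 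Since this spanning set has at most $|\mathcal{M}_{\le\ell}|\cdot D\le\binom{n+\ell}{\ell}\cdot D$ elements, we obtain $\dim_\F V_{\kappa,\ell}(\tilde f)\le\binom{n+\ell}{\ell}\cdot D$, which already supplies the factor $\binom{n+\ell}{\ell}$.

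It then remains to establish $D\le s^{O(\kappa)}$. The key observation is that a size-$s$ arithmetic circuit with fan-in two reads at most $2s$ distinct variables; let $V\subseteq[n]$ denote this set. If $S\not\subseteq V$ then $\partial_S\tilde f=0$, so the only possibly nonzero generators come from $\kappa$-subsets of $V$, of which there are at most $\binom{|V|}{\kappa}\le\binom{2s}{\kappa}$. A span of at most that many vectors has dimension at most $\binom{2s}{\kappa}\le(2s)^\kappa=s^{O(\kappa)}$, which is the derivative factor. Combining the two factors gives $\rankSPDP_{\kappa,\ell}(\tilde f)\le s^{O(\kappa)}\binom{n+\ell}{\ell}$ as claimed.

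The step requiring the most care is the derivative-factor bound, specifically its interaction with the product (Leibniz) rule. A naive gate-by-gate induction fails: at a product gate $g_ag_b$ one has $\partial_S(g_ag_b)=\sum_{T\subseteq S}\partial_T g_a\cdot\partial_{S\setminus T}g_b$, so bounding local derivative spans \emph{multiplicatively} would compound to $2^{\Omega(s)}$ across many product gates, far worse than $s^{O(\kappa)}$. The counting argument above sidesteps this precisely because it bounds the \emph{number of derivative operators that can act nontrivially} rather than tracking per-gate spans, with the circuit size entering only through the support bound $|V|\le 2s$. I would flag one convention-dependent point: for unbounded fan-in one cannot assert $|V|\le 2s$, and the clean $s^{O(\kappa)}$ form then requires either the standard hypothesis $n\le\mathrm{poly}(s)$ or a depth-reduction/gate-polynomial refinement; under the fan-in-two convention no such caveat is needed. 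Finally, the argument is compatible with the multilinear/Boolean ambient convention of \ref{subsec:ambient-basis}, since passing to the fixed ambient space only deletes rows/columns and, by \ref{lem:submatrix-monotonicity}, cannot increase the rank.
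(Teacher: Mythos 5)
Your proof is correct (under the standard fan-in-two convention, which you rightly flag, and which the paper's own proof also implicitly assumes via its binary addition and multiplication gates), but it takes a genuinely different route from the paper. The paper proves \ref{lem:circuit-ub} by induction on circuit structure: input gates contribute $O\bigl(\tbinom{n+\ell}{\ell}\bigr)$, addition gates are handled by subadditivity of SPDP spans, and multiplication gates by the Leibniz rule, with the rank said to grow by a factor $O(1)^{\kappa}$ per gate and the contributions then aggregated over $s$ gates. Your argument is instead a global counting bound: you factor $\Gamma_{\kappa,\ell}(\tilde f)\le|\mathcal{M}_{\le\ell}|\cdot D$ with $D=\dim\mathrm{span}\{\partial_S\tilde f:|S|=\kappa\}$, and bound $D\le\binom{2s}{\kappa}$ because a size-$s$ fan-in-two circuit depends on at most $2s$ variables, so every other derivative generator vanishes; this gives $\binom{2s}{\kappa}\binom{n+\ell}{\ell}\le s^{O(\kappa)}\binom{n+\ell}{\ell}$ with no per-gate bookkeeping. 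What your approach buys is rigor and economy: notably, your criticism of the ``naive gate-by-gate induction'' applies essentially verbatim to the paper's own sketch, since multiplicative per-gate factors of $O(1)^{\kappa}$ compounded over $s$ gates would give $O(1)^{\kappa s}$ rather than $s^{O(\kappa)}$, so the paper's accounting is loose as written whereas your counting argument is airtight. What the paper's inductive style buys, in principle, is structure-sensitivity: Leibniz-rule induction is the template that yields stronger bounds for restricted models (formulas, bounded depth, homogeneous circuits) where the true rank is far below the generator count, whereas your proof exposes that the lemma as stated never uses the circuit beyond its variable support --- it is the trivial generator-counting bound with $n$ replaced by $2s$. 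One small precision: passing to the multilinear ambient convention of \ref{subsec:ambient-basis} is a linear projection (columns merge under reduction modulo the Boolean ideal) rather than literally a submatrix, but the conclusion you need --- that rank cannot increase --- holds for the same reason as in \ref{lem:submatrix-monotonicity}, since applying a linear map to the row vectors cannot raise the dimension of their span.
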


\begin{proof}
The proof proceeds by induction on the circuit structure. Each gate $g$ in the
circuit computes a polynomial $p_g$. We track how the SPDP space grows as we
compose gates.

For an input gate computing $x_i$, we have $\partial^\alpha x_i = 0$ for
$|\alpha| > 1$, so the contribution to the SPDP space is $O(\binom{n+\ell}{\ell})$.

For an addition gate computing $p + q$, we have
$\SPDPspace_{\kappa,\ell}(p+q) \subseteq \SPDPspace_{\kappa,\ell}(p) + \SPDPspace_{\kappa,\ell}(q)$,
so rank at most doubles.

For a multiplication gate computing $p \cdot q$, the product rule gives
$\partial^\alpha(pq) = \sum_{\beta \le \alpha} \binom{\alpha}{\beta}
\partial^\beta p \cdot \partial^{\alpha-\beta} q$. Each term $m \cdot \partial^\alpha(pq)$
expands into at most $2^{\kappa}$ products of shifted derivatives of $p$ and $q$.
Specifically, for $|\alpha|=\kappa$ there are $\binom{\kappa}{|\beta|}$ ways to split the derivative
order; each shifted derivative lies in the corresponding SPDP space, and multiplying
polynomials from two spaces yields dimension at most the product. Summing over all
$\le 2^{\kappa}$ splittings, the rank contribution grows by a factor of $O(1)^{\kappa}$ per gate.
(This style of inductive bookkeeping is standard in SPD analyses; see \citet{NW1997,SY2010}.)

Summing over at most $s$ gates, the total rank is bounded by $s^{O(\kappa)} \cdot \binom{n+\ell}{\ell}$.
\end{proof}

\begin{corollary}[Logarithmic parameters]
\label{cor:circuit-log-params}
For $\kappa, \ell = O(\log n)$ and $s = \poly(n)$:
\[
\rankSPDP_{\kappa,\ell}(\tilde f) \;\le\; n^{O(\log n)}.
\]
\end{corollary}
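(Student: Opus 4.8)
The plan is to treat Corollary~\ref{cor:circuit-log-params} purely as an asymptotic specialization of Lemma~\ref{lem:circuit-ub}: the combinatorial content is already established there, so all that remains is to substitute the parameter regime and simplify. First I would start from the bound
\[
\rankSPDP_{\kappa,\ell}(\tilde f) \;\le\; s^{O(\kappa)}\cdot\binom{n+\ell}{\ell},
\]
and bound each of the two factors separately under the hypotheses $\kappa,\ell=O(\log n)$ and $s=\poly(n)=n^{O(1)}$.

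For the first factor I would write $s=n^{O(1)}$ and compute $s^{O(\kappa)}=(n^{O(1)})^{O(\log n)}=n^{O(\log n)}$, absorbing the product of the two exponent constants into a single implied constant. For the second factor I would use the crude estimate $\binom{n+\ell}{\ell}\le (n+\ell)^{\ell}\le (2n)^{\ell}$ (valid once $\ell\le n$), and then with $\ell=O(\log n)$ obtain $(2n)^{O(\log n)}=2^{O(\log n)}\cdot n^{O(\log n)}=n^{O(\log n)}$, since the factor $2^{O(\log n)}=n^{O(1)}$ is subsumed. Multiplying the two estimates yields $n^{O(\log n)}\cdot n^{O(\log n)}=n^{O(\log n)}$, which is exactly the claimed bound.

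There is essentially no obstacle here beyond careful bookkeeping of the implied constants in the $O(\cdot)$ notation: the step that warrants a moment of attention is verifying that the constants hidden in $\kappa=O(\log n)$, $s=n^{O(1)}$, and the $O(\kappa)$ exponent inherited from Lemma~\ref{lem:circuit-ub} combine into a single finite constant in the final $n^{O(\log n)}$. Since each is an absolute constant independent of $n$, their sums and products remain constant, so the conclusion follows immediately. The only point worth flagging is that both factors contribute at the same $n^{O(\log n)}$ order, so neither dominates and the stated bound is the natural granularity of the estimate.
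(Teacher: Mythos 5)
Your proposal is correct and follows essentially the same route as the paper's proof: substitute $s=n^{O(1)}$, $\kappa=O(\log n)$ into Lemma~\ref{lem:circuit-ub} to get $s^{O(\kappa)}=n^{O(\log n)}$, bound $\binom{n+\ell}{\ell}\le (n+\ell)^{\ell}=n^{O(\log n)}$ for $\ell=O(\log n)$, and multiply. The extra bookkeeping you do (the $(2n)^{\ell}$ step and the remark on combining implied constants) is just a more explicit rendering of the same argument.
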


\begin{proof}
With $s=n^{O(1)}$ and $\kappa=O(\log n)$, we have $s^{O(\kappa)}=(n^{O(1)})^{O(\log n)}=n^{O(\log n)}$.
Similarly, $\binom{n+\ell}{\ell}\le (n+\ell)^\ell = n^{O(\log n)}$ when $\ell=O(\log n)$.
Hence $\rankSPDP_{\kappa,\ell}(\tilde f)\le n^{O(\log n)}\cdot n^{O(\log n)}=n^{O(\log n)}$.
\end{proof}

\begin{corollary}[Constant parameters]
\label{cor:circuit-const-params}
For $\kappa = O(1)$, $\ell = O(1)$, and $s = \poly(n)$:
\[
\rankSPDP_{\kappa,\ell}(\tilde f) \;\le\; n^{O(1)}.
\]
\end{corollary}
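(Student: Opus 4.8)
The plan is to invoke the circuit upper bound of \ref{lem:circuit-ub} and substitute the constant-parameter regime directly; the corollary is a one-line specialization exactly parallel to \ref{cor:circuit-log-params}. First I would recall the bound
\[
\rankSPDP_{\kappa,\ell}(\tilde f) \;\le\; s^{O(\kappa)} \cdot \binom{n+\ell}{\ell},
\]
which holds for any $\tilde f$ computed by an arithmetic circuit of size $s$. The entire argument then reduces to estimating each factor under the three hypotheses $\kappa=O(1)$, $\ell=O(1)$, and $s=\poly(n)$.

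For the first factor, since $\kappa$ is a constant the exponent $O(\kappa)$ collapses to $O(1)$; combined with $s=n^{O(1)}$ this gives $s^{O(\kappa)}=(n^{O(1)})^{O(1)}=n^{O(1)}$. For the second factor, the crude estimate $\binom{n+\ell}{\ell}\le (n+\ell)^{\ell}$ together with $\ell=O(1)$ yields $n^{O(1)}$ as well. Multiplying two polynomial bounds preserves polynomiality, so
\[
\rankSPDP_{\kappa,\ell}(\tilde f)\;\le\; n^{O(1)}\cdot n^{O(1)}\;=\;n^{O(1)},
\]
as claimed.

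There is no genuine obstacle here; the only point worth flagging is the scoping of the asymptotic notation. One must read $\kappa=O(1)$ and $\ell=O(1)$ as asserting the existence of absolute constants, so that the hidden constants in $s^{O(\kappa)}$ and $(n+\ell)^{\ell}$ are genuinely independent of $n$. Given that both $\kappa$ and $\ell$ are constants by hypothesis, the two factors do not interact and the bound follows immediately from \ref{lem:circuit-ub} by substitution.
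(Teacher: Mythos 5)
Your proposal is correct and follows the paper's own proof exactly: both specialize the bound $s^{O(\kappa)}\cdot\binom{n+\ell}{\ell}$ from \ref{lem:circuit-ub}, using $s^{O(\kappa)}=n^{O(1)}$ when $\kappa=O(1)$ and $s=\poly(n)$, and $\binom{n+\ell}{\ell}\le (n+\ell)^{\ell}=n^{O(1)}$ when $\ell=O(1)$. Your remark about the scoping of the asymptotic constants is a fine (if optional) clarification; nothing further is needed.
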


\begin{proof}
With $\kappa,\ell=O(1)$, we have $s^{O(\kappa)}=s^{O(1)}=n^{O(1)}$ and
$\binom{n+\ell}{\ell}\le n^{O(1)}$. Hence
$\rankSPDP_{\kappa,\ell}(\tilde f)\le n^{O(1)}\cdot n^{O(1)}=n^{O(1)}$.
\end{proof}

\section{Illustrative lower bound}
\label{sec:example-lower}

The following is an example of a classical large-rank phenomenon for an explicit polynomial.
By itself it is a rank lower bound statement; to convert it into a circuit lower bound one
must pair it with an appropriate upper bound for the circuit class at the same parameters.

\begin{lemma}[High-order derivative rank for the permanent]
\label{lem:perm-high-order-der-rank}
Let $\Perm_n$ denote the $n\times n$ permanent polynomial in variables
$\{x_{i,j}\}_{i,j\in[n]}$:
\[
\Perm_n(X) \;=\; \sum_{\pi\in S_n}\ \prod_{i=1}^n x_{i,\pi(i)}.
\]
Let $\kappa=\lfloor n/2\rfloor$ and set $\ell=0$. Then
\[
\Gamma_{\kappa,0}(\Perm_n)\ \ge\ \binom{n}{\kappa}.
\]
\end{lemma}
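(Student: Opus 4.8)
The plan is to exploit the fact that with $\ell=0$ the only shift monomial is the constant $1$, so the SPDP generating family collapses to the pure $\kappa$-th order mixed partials $\{\partial_S \Perm_n : |S|=\kappa\}$, where $S$ ranges over $\kappa$-subsets of the $n^2$ matrix variables $\{x_{i,j}\}$. I would first record the basic vanishing/identification fact for derivatives of the permanent: because $\Perm_n$ is multilinear and uses exactly one variable from each row and each column, a mixed partial $\partial_{x_{i_1,j_1}}\cdots\partial_{x_{i_\kappa,j_\kappa}}\Perm_n$ vanishes unless the rows $i_1,\dots,i_\kappa$ are distinct and the columns $j_1,\dots,j_\kappa$ are distinct (i.e.\ the differentiated cells form a partial matching). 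For such a partial matching $S$ on row set $I$ and column set $J$, differentiating selects precisely the permutations $\pi$ sending $I$ onto $J$ in the prescribed way and strips those factors, yielding the complementary subpermanent on rows $[n]\setminus I$ and columns $[n]\setminus J$.

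Next I would build an explicit family of $\binom{n}{\kappa}$ candidate derivatives. Fixing the row set to $I=\{1,\dots,\kappa\}$ once and for all, for each $\kappa$-subset $J\subseteq[n]$ of columns I would choose any matching of $I$ to $J$ (say the order-preserving one) and set $S_J$ to be the corresponding set of cells. By the identification above, $\partial_{S_J}\Perm_n$ equals the permanent of the submatrix on rows $\{\kappa+1,\dots,n\}$ and columns $[n]\setminus J$; in particular it is a nonzero polynomial, since $n-\kappa=\lceil n/2\rceil\ge 0$ and the permanent of a square submatrix of distinct variables does not vanish. (One checks that this derivative is in fact independent of which matching $I\to J$ one picks, though we need only a single choice.)

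The crux is to show these $\binom{n}{\kappa}$ polynomials are linearly independent, which I would do by a disjoint-support argument rather than any explicit determinant computation. Every monomial of $\partial_{S_J}\Perm_n$ is a product $\prod_{i=\kappa+1}^n x_{i,\tau(i)}$ for a bijection $\tau$ from $\{\kappa+1,\dots,n\}$ onto $[n]\setminus J$, so the set of column indices occurring in any such monomial is exactly $[n]\setminus J$. Hence the column-support of a monomial recovers $J$, and for $J\ne J'$ the polynomials $\partial_{S_J}\Perm_n$ and $\partial_{S_{J'}}\Perm_n$ share no monomial. Nonzero polynomials with pairwise disjoint monomial supports are automatically linearly independent, so the span of $\{\partial_{S_J}\Perm_n : J\in\binom{[n]}{\kappa}\}$ has dimension $\binom{n}{\kappa}$. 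Since these are among the SPDP generators at $(\kappa,0)$, monotonicity of rank over a subfamily (\ref{lem:submatrix-monotonicity}) gives $\Gamma_{\kappa,0}(\Perm_n)\ge\binom{n}{\kappa}$.

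The only real care-point, as opposed to a genuine obstacle, is getting the derivative identification exactly right: confirming the vanishing on repeated rows or columns, and verifying that after fixing the rows to $[\kappa]$ the resulting derivative is precisely the complementary subpermanent with column support $[n]\setminus J$. Once that bookkeeping is pinned down, the disjoint-support observation makes linear independence immediate and sidesteps any delicate rank estimate; the bound is exact rather than asymptotic, and no structural hypotheses on the model are needed since $\Perm_n$ is treated directly.
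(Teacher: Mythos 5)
Your proof is correct and takes essentially the same route as the paper's: both exhibit an explicit family of $\binom{n}{\kappa}$ order-$\kappa$ derivatives, identify each as the permanent of a complementary submatrix, and deduce linear independence from monomial-support considerations, then conclude via the span/subfamily monotonicity. The only cosmetic differences are that the paper differentiates the diagonal cells $\{x_{i,i} : i \in R\}$ and uses a single witness monomial $\prod_{i \notin R} x_{i,i}$ per family member, whereas you differentiate a matching from the fixed rows $[\kappa]$ onto a varying column set $J$ and note that the resulting polynomials have pairwise disjoint monomial supports outright.
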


\begin{proof}
For $\ell=0$, the SPDP generating family is just the set of order-$\kappa$ partial
derivatives $\{\partial^S \Perm_n : |S|=\kappa\}$, and $\Gamma_{\kappa,0}(\Perm_n)$ is the
rank of the coefficient matrix whose rows are these derivatives written in the
canonical monomial basis of degree $n-\kappa$ (under the fixed ambient convention).
Equivalently,
\[
\Gamma_{\kappa,0}(\Perm_n) \;=\; \dim \mathrm{span}\{\partial^S \Perm_n : |S|=\kappa\}.
\]
So it suffices to exhibit $\binom{n}{\kappa}$ order-$\kappa$ partial derivatives that are
linearly independent.

\paragraph{Step 1: Choose a structured family of derivatives.}
For each subset $R\subseteq[n]$ with $|R|=\kappa$, define the order-$\kappa$ differential
operator
\[
\partial_R \;:=\; \prod_{i\in R} \frac{\partial}{\partial x_{i,i}}.
\]
Let
\[
p_R \;:=\; \partial_R \Perm_n.
\]
We will show that the family $\{p_R : R\subseteq[n],\ |R|=\kappa\}$ is linearly
independent.

\paragraph{Step 2: Compute $p_R$ explicitly as a sub-permanent.}
Expanding $\Perm_n$ and differentiating termwise,
\[
p_R
\;=\;
\partial_R \left(\sum_{\pi\in S_n}\ \prod_{i=1}^n x_{i,\pi(i)}\right)
\;=\;
\sum_{\pi\in S_n}\ \partial_R\left(\prod_{i=1}^n x_{i,\pi(i)}\right).
\]
Fix a permutation $\pi$. The monomial $\prod_{i=1}^n x_{i,\pi(i)}$ contains the
variable $x_{i,i}$ if and only if $\pi(i)=i$. Therefore:
\begin{itemize}
\item if there exists $i\in R$ with $\pi(i)\neq i$, then $\partial_R$ kills the monomial;
\item if $\pi(i)=i$ for all $i\in R$, then differentiating removes exactly the
factors $x_{i,i}$ for $i\in R$ and leaves $\prod_{i\notin R} x_{i,\pi(i)}$.
\end{itemize}
Hence
\[
p_R
\;=\;
\sum_{\pi\in S_n:\ \pi(i)=i\ \forall i\in R}\ \prod_{i\notin R} x_{i,\pi(i)}.
\]
But permutations $\pi\in S_n$ that fix every $i\in R$ are in bijection with
permutations of the complement $[n]\setminus R$: the restriction of $\pi$ to
$[n]\setminus R$ is an element of $S_{[n]\setminus R}$, and conversely any
permutation of $[n]\setminus R$ extends uniquely by fixing $R$ pointwise.
Therefore,
\[
p_R
\;=\;
\sum_{\sigma\in S_{[n]\setminus R}}\ \prod_{i\notin R} x_{i,\sigma(i)},
\]
which is exactly the permanent of the $(n-\kappa)\times(n-\kappa)$ submatrix obtained by
restricting to rows and columns indexed by $[n]\setminus R$:
\[
p_R \;=\; \Perm\!\left(X\big|_{([n]\setminus R)\times([n]\setminus R)}\right).
\]

\paragraph{Step 3: Exhibit a unique monomial in each $p_R$.}
Define the diagonal monomial on the complement:
\[
m_R \;:=\; \prod_{i\notin R} x_{i,i}.
\]
This monomial appears in $p_R$ with coefficient $1$ (it corresponds to the
identity permutation on $[n]\setminus R$ in the sum above).

Now let $R'\neq R$ be another $\kappa$-subset. Then there exists some index
$i\in([n]\setminus R)\cap R'$ (because the complements differ).
In $p_{R'}$, every monomial uses only variables from rows in $[n]\setminus R'$
(and columns in $[n]\setminus R'$). Since $i\in R'$, the variable $x_{i,i}$ cannot
appear in any monomial of $p_{R'}$. But $m_R$ contains the factor $x_{i,i}$, so
\[
m_R \text{ appears in } p_R,\quad \text{but } m_R \text{ does not appear in } p_{R'}.
\]
Thus, each $p_R$ contains a monomial $m_R$ that is unique to it among the family.

\paragraph{Step 4: Conclude linear independence and the rank bound.}
Suppose we have a linear relation
\[
\sum_{|R|=\kappa} \alpha_R\, p_R \;=\; 0.
\]
Look at the coefficient of the monomial $m_{R_0}$ for a fixed $R_0$.
By Step 3, $m_{R_0}$ appears only in $p_{R_0}$ and in no other $p_R$ with $R\neq R_0$.
Therefore the coefficient of $m_{R_0}$ in the left-hand side is exactly $\alpha_{R_0}$,
so $\alpha_{R_0}=0$. Since $R_0$ was arbitrary, all $\alpha_R=0$, proving that the
$\binom{n}{\kappa}$ polynomials $\{p_R\}$ are linearly independent.

Hence
\[
\dim \mathrm{span}\{\partial^S \Perm_n : |S|=\kappa\}
\;\ge\;
\dim \mathrm{span}\{p_R : |R|=\kappa\}
\;=\;
\binom{n}{\kappa}.
\]
Therefore $\Gamma_{\kappa,0}(\Perm_n)\ge \binom{n}{\kappa}$, as claimed.
\end{proof}

\begin{remark}
This type of bound is standard in the partial-derivative literature: derivatives of the
permanent expose many distinct sub-permanents. In applications, one combines such a lower
bound with a matching SPDP upper bound for a \emph{restricted} computation model at the
same $(\kappa,\ell)$.
\end{remark}

\section{How to use SPDP rank and codimension}
\label{sec:how-to-use}

For a target circuit class $\mathcal{C}$ and an explicit family $p^\star$, the SPDP method
follows a three-step pattern:
\begin{enumerate}[label=(\roman*),leftmargin=2.4em]
\item Fix a parameter regime $(\kappa,\ell)$ and an embedding/ambient basis convention.
\item Prove an upper bound $\Gamma_{\kappa,\ell}(p)\le U_{\mathcal{C}}(\cdot)$ for all $p\in\mathcal{C}$.
\item Prove a lower bound $\Gamma_{\kappa,\ell}(p^\star)\ge L(\cdot)$ for the explicit family.
\end{enumerate}
If $L(\cdot) > U_{\mathcal{C}}(\cdot)$ in the same regime, then $p^\star \not\in \mathcal{C}$.
Codimension provides an equivalent dual statement, often giving a clearer rigidity
interpretation.

\section{Worked example (toy SPDP matrix)}
\label{sec:toy-example}

\begin{example}
Let $n=3$ and $p(x_1,x_2,x_3)=x_1x_2+x_2x_3$ over $\F$. Take $(\kappa,\ell)=(1,1)$ and the
multilinear ambient basis $B_{1,1}$ of degree $\le D=\deg(p)-\kappa+\ell=2$:
\[
  B_{1,1}=\{1,x_1,x_2,x_3,x_1x_2,x_1x_3,x_2x_3\}.
\]
The first derivatives are
\[
  \partial_{x_1}p=x_2,\quad \partial_{x_2}p=x_1+x_3,\quad \partial_{x_3}p=x_2.
\]
With shift monomials $\mathcal{M}_{\le 1}=\{1,x_1,x_2,x_3\}$, the generator family
$\mathcal{G}_{1,1}(p)$ contains (for example) $1\cdot \partial_{x_1}p=x_2$ and
$x_1\cdot \partial_{x_1}p=x_1x_2$, and similarly for $\partial_{x_2}p,\partial_{x_3}p$.
Forming the coefficient rows of all $m\cdot \partial_{x_i}p$ in the basis $B_{1,1}$
produces the SPDP matrix $M_{1,1}(p)$, whose rank equals $\Gamma_{1,1}(p)$.
\end{example}

\section{Discussion and scope}
\label{sec:discussion}
SPDP rank corresponds to the classical SPD dimension measure; SPDP codimension is
the dual invariant that quantifies the \emph{deficit} from the ambient space and
often provides a clearer rigidity interpretation. For future applications, one
typically seeks explicit families for which $\rankSPDP_{\kappa,\ell}$ is large (or
$\codimSPDP_{\kappa,\ell}$ is small/large depending on ambient convention), while
establishing upper bounds for restricted circuit classes in the \emph{same}
parameter regime.

\paragraph{Limitations}
This paper does not claim any major separation consequences. It provides a
standalone measure definition, a matrix formalism, and basic properties and
template bounds that are intended as building blocks for subsequent work.

\subsection{Compatibility with the block-partitioned SPDP/CEW framework}
\label{subsec:compat-spdp-cew}

This subsection is included purely as a notational dictionary between the unblocked
definitions used in this note and a block-partitioned refinement used elsewhere~\citep{Edwards2025}.
No theorem or proof in this paper relies on the block-partitioned framework.

This note presents SPDP rank and codimension in an ``unblocked'' (global) matrix
formalism. Related work (Edwards 2025) additionally fixes a block partition $B$
(radius--$1$ compiler blocks) and restricts derivative supports and basis changes to act
block-locally. We record the dictionary and explain how the width-based upper bounds relate.

\paragraph{Blocked vs.\ unblocked SPDP matrices}
Fix a polynomial $p \in \mathbb{F}[x_1,\dots,x_n]$ and a partition
$\mathcal{B}=\{\beta_1,\dots,\beta_m\}$ of $[n]$ into blocks of size $\le b=O(1)$.
For a multi-index $\tau\in\mathbb{N}^n$, write the \emph{block support}
\[
  \mathrm{supp}_{\mathcal{B}}(\tau)
  \;:=\;
  \{\, j \in [m] : \exists i\in \beta_j \text{ with } \tau_i>0 \,\}.
\]
The block-partitioned SPDP matrix $M^{\mathcal{B}}_{\kappa,\ell}(p)$ used in the
SPDP/CEW manuscript \citep{Edwards2025} is obtained from the same generator family
$\{u\cdot \partial^\tau p\}$ (with $|\tau|=\kappa$, $\deg u\le \ell$) but with rows
indexed/filtered by block support conditions and with columns indexed by a
block-compatible ambient monomial set (e.g.\ monomials of degree
$\le \deg(p)-\kappa+\ell$ under the chosen embedding convention).
Equivalently, $M^{\mathcal{B}}_{\kappa,\ell}(p)$ is obtained by restricting the row
index set to block-admissible derivative supports and using a block-compatible
choice of column basis.

\begin{remark}[Rank comparison]
\label{rem:rank-comparison}
For any fixed ambient convention, $M^{\mathcal{B}}_{\kappa,\ell}(p)$ is a
(row/column) submatrix or structured restriction of the unblocked
$M_{\kappa,\ell}(p)$. Hence
\[
  \Gamma^{\mathcal{B}}_{\kappa,\ell}(p)\;\le\;\Gamma_{\kappa,\ell}(p).
\]
Conversely, taking the trivial partition into singletons ($|\beta_j|=1$) recovers
the unblocked definition as a special case. Thus the two formalisms are
compatible; the block-partitioned version is a structured refinement.
\end{remark}

\paragraph{Invariance: global affine maps vs.\ block-local compiler maps}
Both presentations use the same linear-algebraic principle:
rank is invariant under invertible left/right multiplication (change of basis).
In the SPDP/CEW framework, the admissible transformations are \emph{block-local}
(e.g.\ $\Pi_+$ and blockwise basis changes), so invariance holds exactly by
block-diagonal invertible operators on rows/columns.

\begin{remark}[Scope note for ``affine invariance'' under Boolean embeddings]
\label{rem:affine-scope}
If one works with multilinear extensions modulo the Boolean ideal
$\langle x_i^2-x_i\rangle$, then not every global affine map $x\mapsto Ax+b$
preserves the embedding convention. In that setting, the invariance statements
should be read in the intended admissible class of transformations:
block-local compiler maps (including $\Pi_+$), blockwise coordinate
permutations, and blockwise changes of monomial basis.

Over the full polynomial ring $\mathbb{F}[x_1,\dots,x_n]$ (without quotienting),
rank invariance always holds under invertible linear changes of the monomial
basis in the ambient coefficient space (i.e.\ changes of coordinates for
representing polynomials, not variable substitutions in the polynomial ring).
If one additionally fixes an ambient space that is closed under the substitution
$x\mapsto Ax+b$ (e.g.\ by taking all monomials up to a sufficiently large degree
cutoff), then the corresponding SPDP matrices are related by invertible
transformations, and rank is preserved.
\end{remark}

\paragraph{Two Width$\Rightarrow$Rank regimes (generic vs.\ compiler-diagonal)}
The standalone note proves a \emph{generic} width-based upper bound:
under radius--$1$ locality, finite alphabet, width $R=C(\log n)^c$, and
parameters $\kappa,\ell=\Theta(\log n)$, one obtains a polynomial bound
$\Gamma_{\kappa,\ell}(p)\le n^{O(1)}$ via (i) profile counting and
(ii) profile subspace dimension bounds.

The SPDP/CEW manuscript \citep{Edwards2025} uses the same bound in the \emph{deterministic compiler,
diagonal-basis} regime, relying on the following structural lemma:
for each interface-anonymous profile $h$, all SPDP rows of type $h$ lie in a
polylog-dimensional subspace $V_h$ of the row space. Summing over the
$(\log n)^{O(1)}$ possible profiles yields a polynomial bound even when
$\kappa,\ell=\Theta(\log n)$.

\begin{proposition}[Upgrade under profile-wise subspace compression]
\label{prop:upgrade-width-rank}
Assume \textnormal{(A1)}--\textnormal{(A4)} and, in addition, the deterministic
compiler\slash diagonal-basis structure that yields \emph{profile subspaces}:
for each interface-anonymous profile $h$ there exists a subspace $V_h$ such that
all SPDP rows generated by windows of profile $h$ lie in $V_h$ and
$\dim V_h\le (\log n)^{O(1)}$.
Then, for $\kappa,\ell=\Theta(\log n)$ and $R=C(\log n)^c$,
\[
  \Gamma_{\kappa,\ell}(p)
  \;\le\;
  \sum_{h} \dim V_h
  \;\le\;
  |H|\cdot (\log n)^{O(1)}
  \;=\;
  (\log n)^{O(1)}
  \;\le\;
  n^{O(1)}.
\]
\end{proposition}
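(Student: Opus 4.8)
The plan is to mirror the decomposition argument from the proof of Theorem~\ref{thm:poly-width-rank}, now taking the profile-subspace structure as a hypothesis rather than deriving it from Lemma~\ref{lem:profile-subspace-dim}. First I would organize the rows of $M_{\kappa,\ell}(p)$ by the profile they realize: each row is the coefficient vector of some generator $m\cdot\partial_S p$ with $|S|=\kappa$ and $\deg(m)\le\ell$, and under the local-width model of Section~\ref{sec:width-implies-rank} this corresponds to a canonical window $w$ carrying an interface-anonymous profile $h(w)\in H$ (Definition~\ref{def:interface-anonymous-profile}). Grouping rows by profile yields a vertical block decomposition of $M_{\kappa,\ell}(p)$ into submatrices $M^{(h)}$, one for each $h\in H$, where $M^{(h)}$ collects exactly the rows whose window has profile $h$.

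Next I would apply rank subadditivity for vertical concatenation: since the row space of a stacked matrix is contained in the sum of the row spaces of its blocks, iterating the two-block inequality gives $\Gamma_{\kappa,\ell}(p)=\rank(M_{\kappa,\ell}(p))\le\sum_{h\in H}\rank(M^{(h)})$. By the standing hypothesis, every row of $M^{(h)}$ lies in the profile subspace $V_h$, so $\rank(M^{(h)})=\dim(\mathrm{RowSpan}\,M^{(h)})\le\dim V_h\le(\log n)^{O(1)}$. Here Lemma~\ref{lem:block-perm-invariance} is what guarantees the decomposition is clean: two windows differing only by within-block interface relabelings realize the same profile, hence are assigned to the same block $M^{(h)}$ and contribute identical rank, so $h$ is a well-defined classifier of the rows and the blocks exhaust all of $M_{\kappa,\ell}(p)$.

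The final step is to count profiles and combine. By Corollary~\ref{cor:poly-many-profiles} (equivalently Lemma~\ref{lem:profile-compression-removes-kappa}), the number of realizable profiles satisfies $|H|\le R^{O(1)}$, independent of $\kappa$. Substituting into the subadditive bound yields $\Gamma_{\kappa,\ell}(p)\le|H|\cdot\max_{h}\dim V_h\le R^{O(1)}\cdot(\log n)^{O(1)}$. With $R=C(\log n)^c$ this collapses to $(\log n)^{O(1)}$, and since any fixed power of $\log n$ is dominated by $n^{O(1)}$, the stated chain $\Gamma_{\kappa,\ell}(p)\le(\log n)^{O(1)}\le n^{O(1)}$ follows.

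I expect no serious obstacle, since the proposition is essentially Theorem~\ref{thm:poly-width-rank} with the profile-subspace bound promoted from a derived fact to an explicit assumption; the content reduces to the bookkeeping of the profile-indexed row decomposition, rank subadditivity, and the $\kappa$-independent profile count. The one point requiring care is ensuring that the $O(1)$ exponents in $|H|\le R^{O(1)}$ and in $\dim V_h\le(\log n)^{O(1)}$ are \emph{absolute} constants depending only on the local representation (via (A1)--(A4) and (C1)--(C2)), so that when multiplied together the exponents merely add and the product remains of the form $(\log n)^{O(1)}$ rather than accruing a hidden $\kappa$- or $n$-dependence.
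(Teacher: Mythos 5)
Your proposal is correct and follows essentially the same route as the paper's proof: group the SPDP rows by profile, bound the rank by the sum over $h\in H$ of $\dim V_h$ (your vertical-block/rank-subadditivity phrasing is just the matrix form of the paper's containment $\mathrm{RowSpan}(M_{\kappa,\ell}(p))\subseteq\sum_{h\in H}V_h$), then invoke the $\kappa$-independent profile count $|H|\le R^{O(1)}=(\log n)^{O(1)}$ and multiply. Your explicit citation of Corollary~\ref{cor:poly-many-profiles} for the profile count, and your closing remark that the hidden exponents are absolute constants of the local model, are if anything slightly more careful than the paper's own wording.
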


\begin{proof}
The SPDP row space decomposes as $\mathrm{RowSpan}(M_{\kappa,\ell}(p)) \subseteq \sum_{h \in H} V_h$,
where the sum is over all realizable interface-anonymous profiles $h$. Hence
\[
\Gamma_{\kappa,\ell}(p) = \dim \mathrm{RowSpan}(M_{\kappa,\ell}(p)) \le \dim \sum_{h \in H} V_h \le \sum_{h \in H} \dim V_h.
\]
By hypothesis, $\dim V_h \le (\log n)^{O(1)}$ for each $h$. Under the assumptions stated,
the number of profiles satisfies $|H| \le (\log n)^{O(1)}$ (this is the polynomial-profile
regime of the deterministic compiler model, as opposed to the $(\log n)^{O(\kappa)}$ count in
the general $\kappa$-step setting). Therefore
\[
\sum_{h \in H} \dim V_h \le |H| \cdot (\log n)^{O(1)} = (\log n)^{O(1)} \cdot (\log n)^{O(1)} = (\log n)^{O(1)} \le n^{O(1)}.
\]
\end{proof}

\paragraph{Remark (separate application).}
In separate work (Edwards 2025) this bound is applied in a compiled setting to obtain
polynomial SPDP rank on the P-side at $\kappa,\ell=\Theta(\log n)$. No results in the present
note depend on that application.

\section{Computational sanity checks}
\label{sec:sanity-checks}

The results in this section are not part of the main theoretical development;
they provide empirical evidence supporting the expected superpolynomial
separation between P-side and NP-side SPDP rank behavior.

Reference implementations and empirical validation code are available at
\url{https://github.com/darrenjedwards/spdp}.

\subsection{Pipeline-aligned sanity checks (end-to-end implementation)}
\label{subsec:pipeline-sanity}

To validate that our empirical pipeline faithfully implements the SPDP
objects and collapse test used throughout the paper, we ran an
\emph{end-to-end} sanity suite that mirrors the intended construction:
\begin{gather*}
\text{circuit} \rightarrow \text{Tseitin CNF} \rightarrow
\text{restriction/pruning} \rightarrow
\text{local canonical window} \\
\rightarrow \text{profile canonicalization} \rightarrow
p(x) \rightarrow \mathrm{rank}\,M_{\kappa,\ell}(p).
\end{gather*}

\paragraph{SPDP objects implemented.}
SPDP rank is computed in the coefficient-space formulation of
\ref{sec:spdp-matrix-rank-codim}. In the Boolean/multilinear setting we use
squarefree monomials and set-derivatives exactly as in
\ref{def:spdp-boolean}. We
also include a standard-ring mode (multi-index derivatives) for control
families where this is the natural representation (as permitted by
\ref{rem:non-multilinear}).

\paragraph{Canonical window and profile compression.}
After restriction/pruning we extract a \emph{local canonical window} of
size $\kappa_{\mathrm{win}}=\Theta(\log n)$ around a live interface (rather
than ranking the entire CNF). We then apply interface-anonymous
canonicalization by identifying variables up to their local incidence
\emph{profile} within the window (a finite signature of
positive/negative occurrences across clause sizes), yielding $P$
distinct profiles. This implements the profile-compression regime used
in the Width$\Rightarrow$Rank bridge and prevents drifting into
$(\log n)^{O(\kappa)}$ counting from ordered step sequences.

\paragraph{Sanity-check criterion.}
For each family we report: the ambient input size $n$, the number of
live variables in the selected window, the number of profiles $P$ after
canonicalization, the computed SPDP rank, and whether the rank is below
the collapse threshold $\lceil\sqrt{n}\rceil$ used throughout the paper.

\begin{table}[H]
\centering
\caption{Pipeline sanity-check snapshot: end-to-end SPDP rank after
restriction/pruning, local windowing, and profile canonicalization,
compared to the collapse threshold $\lceil\sqrt{n}\rceil$.}
\label{tab:pipeline-sanity}
\begin{tabular}{lrrrrrr}
\toprule
Circuit / family & $n$ & \shortstack{Live\\vars} & Profiles & \shortstack{SPDP\\rank} & $\lceil\sqrt{n}\rceil$ & Pass? \\
\midrule
\shortstack[l]{RandDeg3\\(profile-compressed)}          &  256  & 23 &  4 & 11 & 16 & $\checkmark$ \\
\shortstack[l]{RandDeg3\\(profile-compressed)}          & 1024  & 29 &  5 & 16 & 32 & $\checkmark$ \\
\shortstack[l]{RandDeg3\\(profile-compressed)}          & 2048  & 31 &  4 & 11 & 45 & $\checkmark$ \\
\shortstack[l]{Goldreich-like\\(profile-compressed)}    & 1024  & 29 &  5 & 16 & 32 & $\checkmark$ \\
Diagonal ($\sum_i x_i^4$)              & 2048  & 49 & 49 & 49 & 45 & $\times$ \\
$\mathrm{perm}_{3\times 3}$            &    9  &  9 &  9 & 55 &  3 & $\times$ \\
\bottomrule
\end{tabular}
\end{table}

\paragraph{Observed qualitative pattern.}
\ref{tab:pipeline-sanity} shows the intended collapse-vs-noncollapse
behavior: representative pseudorandom and structured CNF families yield small
SPDP rank well below $\lceil\sqrt{n}\rceil$, while algebraically rigid
controls (a diagonal high-degree sum and the $3\times 3$ permanent) do
not collapse. This supports that the code path implements the same SPDP
objects defined in \ref{sec:spdp-matrix-rank-codim} and that the end-to-end
pipeline produces the expected pattern in a controlled setting.

\section{Conclusion}
We defined SPDP spaces and SPDP matrices and developed SPDP rank and codimension
as dual complexity invariants. We proved foundational properties (monotonicity,
symmetry invariance, and representation robustness) and stated generic circuit
upper-bound templates together with codimension reformulations. This establishes
SPDP rank/codimension as a coherent, standalone complexity-measure toolkit.


\end{document}